\DeclarePairedDelimiter{\floor}{\lfloor}{\rfloor}
\DeclarePairedDelimiter{\bigfloor}{\Bigl\lfloor}{\Bigr\rfloor}
\newtheorem{theorem}{Theorem}
\newtheorem{definition}{Definition}
\newtheorem{lemma}{Lemma}
\newtheorem{corollary}{Corollary}
\newtheorem{construction}{Construction}
\newtheorem{example}{Example}
\numberwithin{subcase}{case}
\newtheorem{remark}{Remark}
\newcommand{\cl}[1]{\mathcal{#1}}
\newcommand{\Fq}{\mathbb{F}_{q}}
\newcommand{\bC}{\mathbb{C}}
\newcommand{\bF}{\mathbb{F}}
\newcommand{\bN}{\mathbb{N}}
\newcommand{\cC}{\mathcal{C}}
\newcommand{\cP}{\mathcal{P}}
\newcommand{\cZ}{\mathcal{Z}}
\newcommand{\qbin}[3]{{#1 \brack #2}_{#3}}
\newcommand{\grsmn}[3]{\cl{G}_{#1}\left(#2,#3\right)}
\newcommand{\Span}[1]{{\left\langle {#1} \right\rangle}}
\DeclareMathOperator{\rank}{rank}
\DeclareMathOperator{\Gab}{Gab}
\DeclareMathOperator{\RS}{RS}
\newcommand{\I}{{I}}
\newcommand{\C}{{C}}
\newcommand{\X}{{X}}
\newcommand{\Y}{{Y}}
\newcommand{\liftmap}[1]{\mathcal{I}\left(#1\right)}
\begin{document}

\title{Some Gabidulin Codes Cannot be\\List Decoded Efficiently at any Radius}

\author{Netanel Raviv, \textit{Student Member, IEEE}, and \IEEEauthorblockN{Antonia Wachter-Zeh}\textit{, Member, IEEE}
\thanks{
	The work of Netanel Raviv was supported in part by the Israeli Science Foundation (ISF), Jerusalem, Israel, under Grant no.~10/12, and by the IBM Ph.D. fellowship. The work of Antonia Wachter-Zeh was supported by a Minvera Postdoctoral Fellowship from the Max-Planck Society and by the European Union's Horizon 2020 research
	and innovation programme under the Marie Sklodowska-Curie grant agreement No 655109.
	
	The authors are with the department of Computer Science, Technion~-- Israel Institute of Technology, Haifa 3200003, Israel (e-mail:
	\textit{\{netanel, antonia\}@cs.technion.ac.il).}
	
	Parts of this work have been presented at the \emph{IEEE International Symposium on Information Theory (ISIT) 2015, HongKong, China}~\cite{RavivWachterzeh-ISITListDecoding}.}
}

\maketitle

\begin{abstract}
Gabidulin codes can be seen as the rank-metric equivalent of Reed--Solomon codes. It was recently proven, using subspace polynomials,  that Gabidulin codes cannot be list decoded beyond the so-called Johnson radius. In another result, cyclic subspace codes were constructed by inspecting the connection between subspaces and their subspace polynomials. In this paper, these subspace codes are used to prove two bounds on the list size in decoding certain Gabidulin codes. The first bound is an existential one, showing that exponentially-sized lists exist for codes with specific parameters. The second bound presents exponentially-sized lists explicitly, for a different set of parameters. Both bounds rule out the possibility of efficiently list decoding several families of Gabidulin codes for any radius beyond half the minimum distance. Such a result was known so far only for non-linear rank-metric codes, and not for Gabidulin codes. Using a standard operation called lifting, identical results also follow for an important class of constant dimension subspace codes.
\end{abstract}

\begin{IEEEkeywords}
Rank-metric codes, Gabidulin codes, list decoding, subspace polynomials, subspace codes.
\end{IEEEkeywords}

\IEEEpeerreviewmaketitle

\section{Introduction}\label{section:introduction}
Rank-metric codes have recently attracted increasing interest due to their application to error correction in random network coding~\cite{ARankMetricApproach} where they can be used to construct constant dimension subspace codes.
Further applications of codes in the rank metric include cryptography \cite{Gabidulin1991Ideals,Loidreau2010Designing}, space-time coding \cite{Lusina2003Maximum,Lu2004Generalized} and distributed storage systems~\cite{SilbersteinRawat-OptimallyLocallyRepairable_2013,SilbersteinRawatVish-ErrorResilDistributedStorage_2012}.

For a prime power $q$, let $\bF_q$ be the field with $q$ elements. For an integer $n$, let $\bF_{q^n}$ be the extension field of degree $n$ of $\bF_q$ (which may be seen as the vector space of dimension $n$ over $\bF_q$, denoted by $\bF_q^n$), and $\bF_{q^n}^*\triangleq \bF_{q^n}\setminus\{0\}$. For $m\ge n$, a rank-metric code is a set of $m\times n$ matrices over $\bF_q$, or alternatively, a set of vectors of length $n$ over the extension field $\bF_{q^m}$, where the distance between two matrices is the rank of their difference. {The \textit{rate} of a rank metric code of size~$M$ is $\frac{\log _{q} M}{mn}$}.  \textit{Gabidulin} codes, introduced by \cite{Delsarte,TheoryOfCodesWithMaximumRankDistance,MaximumRankArrayCodes}, may be seen as the rank-metric equivalent of Reed--Solomon codes. These codes are defined as evaluations of \textit{linearized polynomials} (see below) of bounded degree at a given set of linearly independent evaluation points. We note that Gabidulin codes, and rank-metric codes in general, may be defined for any $m\ge n$, while our results only apply for the case $n$ divides $m$ (and in some cases, when $n+1$ divides $m$ by puncturing). In particular, our results apply for~$n=m$.

Given a word $w\in\bF_{q^m}^n$ (or alternatively, a matrix ${w\in\bF_q^{m\times n}}$), a \textit{list decoding} algorithm outputs all Gabidulin codewords that are inside a ball of radius $\tau$, centered at $w$, where $\tau$ is possibly larger than the unique decoding radius of the code. For a given code, a natural question to ask is: for which values of $\tau$ can list decoding be done efficiently? List decoding of rank-metric codes and Gabidulin codes was recently studied in \cite{Ding,GuruswamiExplicit,BoundsOnListDecodingOfRankMetric}. In \cite{BoundsOnListDecodingOfRankMetric}, it was shown that Gabidulin codes cannot be list decoded beyond the Johnson radius. 
This result was generalized to any rank-metric code by~\cite{Ding}. {When $m$ is sufficiently large,~\cite{Ding} also showed that with high probability a random rank-metric code can be efficiently list decoded}. 
Further, it was shown in~\cite{BoundsOnListDecodingOfRankMetric} that there is no Johnson-like polynomial upper bound on the list size since there exists a non-linear rank-metric code with exponentially growing list size for any radius greater than the unique decoding radius. 
In~\cite{GuruswamiExplicit}, an explicit subcode of a Gabidulin code was shown to be efficiently list decodable. In addition,  \cite{Ding,GuruswamiExplicit}, and \cite{BoundsOnListDecodingOfRankMetric} have noted that it is not known if Gabidulin codes themselves can be efficiently list decoded beyond the unique decoding radius. In this paper, it is shown that the answer to this question is negative.

Clearly, if there exists a word $w\in\bF_{q^m}^n$ with exponentially many Gabidulin codewords in a radius $\tau$ around it, then efficient list decoding is not possible for this radius. This combinatorial technique was used in \cite{SubspacePolynomialsAndLimits} to show the limits of list decoding of Reed--Solomon codes, and in \cite{BoundsOnListDecodingOfRankMetric} to show the limits of list decoding of Gabidulin codes. 

The main tool in \cite{SubspacePolynomialsAndLimits,BoundsOnListDecodingOfRankMetric} is subspace polynomials, which are a special type of linearized polynomials. Linearized polynomials, defined by Ore \cite{OnASpecialClassOfPolynomials}, are polynomials of the form
\[
P(x)= a_r \cdot x^{[r]}+\cdots +a_1\cdot x^{[1]}+a_0\cdot x,
\]
where $[i]\triangleq q^i$ and the coefficients are in the finite field~$\bF_{q^n}$ for some given $n$. For a linearized polynomial $P$, define the $q$-degree of $P$ as $\deg_q P\triangleq r= \log_q \deg P$. Using the isomorphism between $\bF_{q^n}$ and $\bF_q^n$, every linearized polynomial may be seen as an $\bF_q$-linear function from $\bF_q^n$ to itself \cite[Chapter~4, p.~108]{FiniteFields}, that is,  for every~$\alpha,\beta\in\bF_q$ and $u,v\in\bF_{q^n}$, each linearized polynomial $P$ satisfies $P(\alpha v+ \beta u)=\alpha P(v)+\beta P(u)$. A subspace polynomial is defined as follows.
\begin{definition}\label{definition:subspacePoly}\cite{Cyclic,AffineDispersers,SubspacePolynomialsAndLimits,ConstructingHighOrderElements,BoundsOnListDecodingOfRankMetric}
A monic linearized polynomial $P$ is called a \textit{subspace polynomial} with respect to $\bF_{q^n}$ if it satisfies the following equivalent conditions:

\begin{enumerate}
\item[A1.] $P$ divides $x^{[n]}-x$.
\item[A2.] $P$ splits completely over $\bF_{q^n}$ and all its roots have multiplicity one.
\item[A3.] \label{item:unique} {For some $0\le r\le n$, there exists an $r$-dimensional subspace $V$ of $\bF_{q^n}$ such that $P(x)=\prod_{v\in V}(x-v)$.}
\end{enumerate}
\end{definition} 

By A3, each subspace $V$ corresponds to a unique subspace polynomial, denoted $P_V$. Subspace polynomials are an efficient method of representing subspaces,
from which one can directly deduce certain properties of the subspace which are not evident in some
other representations. These objects were studied in the past for various other purposes, e.g., construction of affine dispersers~\cite{AffineDispersers}, finding an element of high multiplicative order in a finite field \cite{ConstructingHighOrderElements}, and construction of cyclic subspace codes~\cite{Cyclic}. Albeit this wide range of applications, not much is known about the coefficients of subspace polynomials and their connection to the properties of the subspace. 

It is known that all roots of every linearized polynomial have the same multiplicity, which is an integer power of~$q$, and these roots form a subspace in the extension field \cite[Theorem~3.50, p.~108]{FiniteFields}. Therefore, any monic linearized polynomial is a power of a subspace polynomial with respect to its splitting field. However, the structure of the coefficients of subspace polynomials, compared to other linearized polynomials of the same degree, is generally not known. A partial answer to this question was given by \cite{Cyclic}, and we use similar techniques 
to show limits of list decoding of Gabidulin codes.

Ben-Sasson et al. \cite{SubspacePolynomialsAndLimits} proved that a given set of subspace polynomials with mutual top coefficients provides an upper bound on the list decoding radius of Reed--Solomon codes. A counting argument was later applied in order to show that such large sets of subspace polynomials do exist. A similar technique was used in \cite{BoundsOnListDecodingOfRankMetric} to show the limits of list decoding of Gabidulin codes. In the sequel, the existence of a set of subspaces whose polynomials have a larger agreement is proved (Theorem~\ref{theorem:GabidulinCounting}). This set is a subset of a subspace code by \cite{Cyclic}. Furthermore, \textit{explicit} dense sets of words in a Gabidulin code are provided (Theorem~\ref{theorem:GabidulinExplicit}). Both bounds are used to show that the respective families of Gabidulin codes cannot be list decoded efficiently \textit{at all}. That is, there exist received words that have exponentially many codewords around them, already for a radius which is only larger than the unique decoding radius by one (Examples~\ref{example:noLD} and~\ref{example:ronny}, and Theorem~\ref{theorem:GabidulinExplicit}). Due to a technical limitation of our techniques, the presented families have rate at least $\frac{1}{5}$. 

Subspace codes have attracted an increasing interest recently due to their application in error correction in random network coding~\cite{KandK}.
It is widely known that rank-metric codes are deeply connected to constant dimension subspace codes through an operation called lifting~\cite{ConstantRankCodesAndTheir,ARankMetricApproach}. This operation preserves the distance and the cardinality of the original rank-metric code. An important family of nearly optimal constant dimension subspace codes are \textit{lifted Gabidulin codes} (that are a special case of the so-called K\"{o}tter and Kschischang codes~\cite{KandK}), which result from Gabidulin codes by lifting (see Definition~\ref{def:lifting_matrixcode}). List decoding of subspace codes was extensively studied in recent years. In particular, several variants and subcodes of the K\"{o}tter and Kschischang codes were shown to be efficiently list decodable~(e.g.,~\cite{Ding,GuruswamiExplicit,GuruswamiExplicit2,LD1,LD2} and references therein), and bounds equivalent to~\cite{BoundsOnListDecodingOfRankMetric} were discussed in~\cite{RosenthalSubspace}. Our results about Gabidulin codes also apply for lifted Gabidulin codes, and thus we get families of subspace codes that cannot be list decoded efficiently at any radius. Our techniques may also be used for showing limits to list decoding of Reed--Solomon codes, but the resulting bounds are too weak to provide any useful insight.

These results reveal a significant difference in list decoding Gabidulin and Reed--Solomon codes, although the definitions of these code classes strongly resemble each other. Namely, Reed--Solomon codes can be efficiently list decoded up to the Johnson radius (with the Guruswami-Sudan algorithm~\cite{Guruswami-Sudan}), whereas we have just proven that (some classes of) Gabidulin codes cannot be list decoded efficiently at all.

The rest of the paper is organized as follows. Notations for subspace codes and the subspace code from~\cite{Cyclic} will be described in Section~\ref{section:preliminaries}, together with the required background on cyclic shifts of subspaces and $q$-associates of polynomials. In Section~\ref{section:setOfSSP}, the code from Section~\ref{section:preliminaries} is used to prove the existence of a certain set of subspace polynomials, and the notion of $q$-associates is used to show an explicit set of another type of subspace polynomials. The improved bounds on list decodability of Gabidulin codes are discussed in Section~\ref{section:improved}, implications about subspace codes are discussed in Section~\ref{section:subspaceCodes}, and conclusions are given in Section~\ref{section:conclusions}. A discussion about the inapplicability of our techniques to list decodability of Reed--Solomon codes appears in~\nameref{appendix:A}.
\section{Preliminaries}\label{section:preliminaries}
The set $\grsmn{q}{n}{r}$, called the \textit{Grassmannian}, is the set of all subspaces of dimension $r$ ($r$-subspaces, in short) of~$\bF_{q^n}$. The size of $\grsmn{q}{n}{r}$ is given by the Gaussian coefficient $\qbin{n}{r}{q}\triangleq\prod_{i=0}^{r-1}\frac{q^{n-i}-1}{q^{i+1}-1}$, which satisfies $q^{r(n-r)}\le \qbin{n}{r}{q}\le 4 q^{r(n-r)}$~\cite{ConstantRankCodesAndTheir}. A constant dimension \textit{subspace code}~\cite{KandK} is a subset of $\grsmn{q}{n}{r}$ under the \textit{subspace metric} $d_S(U,V)=\dim U+\dim V-2\dim(U\cap V)$. 

An extensively used concept in this paper is \textit{cyclic shifts} of subspaces, defined as follows.

\begin{definition}
For $V\in\grsmn{q}{n}{r}$ and $\alpha\in\bF_{q^n}^*$ let $\alpha V\triangleq\{\alpha v\vert v\in V\}$.
\end{definition}
The set $\alpha V$, which is clearly a subspace of the same dimension as $V$, is called a \textit{cyclic shift} of~$V$. Cyclic shifts were shown to be useful for constructing subspace codes~\cite{Cyclic,ErrorCorrectingCodesInProjectiveSpace}. The set of all cyclic shifts of $V\in\grsmn{q}{n}{r}$ is called the \textit{orbit} of $V$, and its size is $\frac{q^n-1}{q^t-1}$ for some integer $t$ which divides~$n$. The size of the orbit and the structure of its subspace polynomials can be derived by inspecting the subspace polynomial of $V$, as shown in the following lemmas.

\begin{lemma}\cite[Lemma~5]{Cyclic} \label{lemma:cyclicShiftPoly}
If $V\in \grsmn{q}{n}{r}$ and $\alpha \in \bF_{q^n}^{*}$ then $P_{\alpha V}(x) = \alpha ^{[r]}\cdot P_V (\alpha^{-1}x)$. That is, if $P_V(x) = x^{[r]} +\sum_{j=0}^{r-1} \alpha_j x^{[j]}$ then $P_{\alpha V}(x)=x^{[r]} +\sum_{j=0}^{r-1} \alpha^{[r]-[j]} \alpha_j x^{[j]}$.
\end{lemma}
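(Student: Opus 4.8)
The plan is to work directly from characterization A3 in Definition~\ref{definition:subspacePoly}, which expresses every subspace polynomial as a product over the points of its defining subspace. Since $V\in\grsmn{q}{n}{r}$, A3 gives $P_V(x)=\prod_{v\in V}(x-v)$, and because $\alpha V$ is again an $r$-dimensional subspace of $\bF_{q^n}$ (as observed immediately before the lemma), $P_{\alpha V}$ is a well-defined subspace polynomial of $q$-degree $r$, equal to $\prod_{w\in\alpha V}(x-w)=\prod_{v\in V}(x-\alpha v)$.

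First I would establish the closed-form identity $P_{\alpha V}(x)=\alpha^{[r]}\cdot P_V(\alpha^{-1}x)$. Starting from $P_V(\alpha^{-1}x)=\prod_{v\in V}(\alpha^{-1}x-v)=\prod_{v\in V}\alpha^{-1}(x-\alpha v)$, I pull the scalar $\alpha^{-1}$ out of each of the $|V|$ factors. The key numerical fact is that an $r$-dimensional $\bF_q$-subspace has exactly $|V|=q^r=[r]$ elements, so the accumulated prefactor is $\alpha^{-[r]}$, giving $P_V(\alpha^{-1}x)=\alpha^{-[r]}\prod_{v\in V}(x-\alpha v)=\alpha^{-[r]}P_{\alpha V}(x)$. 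Multiplying through by $\alpha^{[r]}$ yields the claimed identity.

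Next I would derive the explicit coefficient form. Writing $P_V(x)=x^{[r]}+\sum_{j=0}^{r-1}\alpha_j x^{[j]}$ and substituting $\alpha^{-1}x$, I use that raising to the power $[j]=q^j$ is multiplicative over $\bF_{q^n}$, so $(\alpha^{-1}x)^{[j]}=\alpha^{-[j]}x^{[j]}$. This gives $P_V(\alpha^{-1}x)=\alpha^{-[r]}x^{[r]}+\sum_{j=0}^{r-1}\alpha_j\alpha^{-[j]}x^{[j]}$, and multiplying by $\alpha^{[r]}$ produces $P_{\alpha V}(x)=x^{[r]}+\sum_{j=0}^{r-1}\alpha^{[r]-[j]}\alpha_j x^{[j]}$, matching the statement and confirming in particular that $P_{\alpha V}$ is monic of $q$-degree $r$.

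There is no substantial obstacle here: the argument is essentially a change of variable together with careful bookkeeping of $q$-powers. The only two points requiring a word of justification are that $\alpha V$ has the same dimension as $V$ (so that its subspace polynomial exists and has degree $q^r$), which is already noted, and that the cardinality $|V|=[r]$ is what makes the prefactors $\alpha^{-[r]}$ and $\alpha^{[r]}$ cancel exactly; getting that exponent right is the one place where an off-by-a-power error could creep in.
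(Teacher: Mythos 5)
Your proof is correct. Note that the paper itself does not prove this lemma at all --- it is quoted from the reference [Cyclic, Lemma~5] --- so the only comparison available is with the standard argument there, which verifies that $\alpha^{[r]}P_V(\alpha^{-1}x)$ is a monic linearized polynomial of $q$-degree $r$ whose root set is exactly $\alpha V$ (since $P_V(\alpha^{-1}x)=0$ iff $\alpha^{-1}x\in V$ iff $x\in\alpha V$) and then invokes the uniqueness in A3. Your route instead factors $\prod_{v\in V}(\alpha^{-1}x-v)=\alpha^{-q^r}\prod_{v\in V}(x-\alpha v)$ directly, using $|V|=q^r=[r]$; this is equivalent and equally rigorous, and it has the small advantage of producing the identity and the monicity in one computation rather than deducing them from a separate root-counting step. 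The coefficient formula via $(\alpha^{-1}x)^{[j]}=\alpha^{-[j]}x^{[j]}$ is handled correctly, and you rightly flag the exponent bookkeeping ($\alpha^{-[r]}$ versus, say, $\alpha^{-r}$) as the one place an error could hide.
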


\begin{lemma}\cite[Corollary~3]{Cyclic} \label{lemma:distictCyclic}
Let $V\in\grsmn{q}{n}{r}$ and $P_V(x)=x^{[r]}+\sum_{j=0}^{r-1} \alpha_j x^{[j]}$. If $\alpha_s \ne 0$ for some $s\in \left\{1,\ldots,r-1\right\}$ and $\gcd(s,n)=t$, then $V$ has at least $\frac{q^n-1}{q^t-1}$ distinct cyclic shifts.
\end{lemma}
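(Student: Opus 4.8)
The plan is to count the cyclic shifts of $V$ via the orbit--stabilizer principle. The distinct cyclic shifts of $V$ are exactly the distinct members of the orbit of $V$ under the multiplicative action of $\bF_{q^n}^*$, so their number equals $(q^n-1)/|S|$, where $S\triangleq\{\gamma\in\bF_{q^n}^*\mid\gamma V=V\}$ is the stabilizer of $V$. Thus it suffices to show $|S|\le q^t-1$, where $t=\gcd(s,n)$.

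First I would pin down $S$ using the uniqueness of subspace polynomials (condition A3 of Definition~\ref{definition:subspacePoly}) together with Lemma~\ref{lemma:cyclicShiftPoly}. For $\gamma\in\bF_{q^n}^*$ we have $\gamma V=V$ iff $P_{\gamma V}=P_V$, and by Lemma~\ref{lemma:cyclicShiftPoly} this holds iff $\gamma^{q^r-q^j}\alpha_j=\alpha_j$ for every $j$, i.e. iff $\gamma^{q^r-q^j}=1$ for every index $j$ with $\alpha_j\ne0$. Hence $S$ consists precisely of the $\gamma$ annihilated by $q^r-q^j$ for all such $j$.

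I would then feed two nonzero coefficients into this description. The hypothesis gives $\alpha_s\ne0$, hence $\gamma^{q^r-q^s}=1$ for all $\gamma\in S$. The second, decisive, ingredient is that $\alpha_0\ne0$ for \emph{every} subspace polynomial: writing $P_V(x)=\prod_{v\in V}(x-v)=x\prod_{0\ne v\in V}(x-v)$, the coefficient $\alpha_0$ of $x$ equals $\prod_{0\ne v\in V}(-v)\ne0$ (equivalently, condition A2 forces the derivative $P_V'(x)=\alpha_0$ to be nonzero). Consequently $\gamma^{q^r-1}=1$ for all $\gamma\in S$ as well, so $S$ is contained in the common solution set of $\gamma^{q^r-1}=1$ and $\gamma^{q^r-q^s}=1$ in $\bF_{q^n}^*$. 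Because $\bF_{q^n}^*$ is cyclic of order $q^n-1$, this solution set has size $\gcd(q^r-1,\,q^r-q^s,\,q^n-1)$.

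It remains to evaluate this gcd, which is the only real computation. By the Euclidean step $\gcd(q^r-1,q^r-q^s)=\gcd(q^r-1,q^s-1)=q^{\gcd(r,s)}-1$, and then $\gcd(q^{\gcd(r,s)}-1,q^n-1)=q^{\gcd(r,s,n)}-1$ using the identity $\gcd(q^a-1,q^b-1)=q^{\gcd(a,b)}-1$. Since $\gcd(r,s,n)$ divides $\gcd(s,n)=t$, this gives $|S|\le q^{\gcd(r,s,n)}-1\le q^t-1$, and hence at least $(q^n-1)/(q^t-1)$ distinct cyclic shifts, as claimed. The step most easily missed is the necessity of the coefficient $\alpha_0$: using $\alpha_s$ alone only bounds $|S|$ by $q^{\gcd(r-s,n)}-1$, which need not be at most $q^t-1$. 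Conceptually, the clean reason is that $S\cup\{0\}$ is a subfield $\bF_{q^{t'}}$ of $\bF_{q^n}$ (it is closed under addition because multiplication by a nonzero element preserves dimension) over which $V$ is a vector space; the nonzero coefficients at indices $r,s,0$ then force $t'\mid\gcd(r,s,n)$, and in particular $t'\mid t$.
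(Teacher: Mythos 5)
The paper never proves this lemma --- it is imported verbatim from \cite{Cyclic} (Corollary~3 there) --- so there is no internal proof to compare against; judged on its own, your proof is correct and complete. The orbit--stabilizer reduction is valid, and your characterization of the stabilizer $S=\{\gamma\in\bF_{q^n}^*\mid \gamma V=V\}$ via Lemma~\ref{lemma:cyclicShiftPoly} together with the uniqueness of subspace polynomials (Definition~\ref{definition:subspacePoly}, A3) is exactly right. You also correctly isolate the ingredient that is easiest to miss: $\alpha_0\ne 0$ for \emph{every} subspace polynomial (since $P_V'=\alpha_0$ and $P_V$ is squarefree by A2), and your warning that $\alpha_s$ alone only yields $|S|\le q^{\gcd(r-s,n)}-1$ is accurate --- for instance $n=6$, $r=4$, $s=1$ gives $\gcd(r-s,n)=3>1=t$, so the lemma would not follow from that weaker bound. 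The gcd evaluation $\gcd(q^r-1,\,q^r-q^s,\,q^n-1)=q^{\gcd(r,s,n)}-1$ is correct, and since $\gcd(r,s,n)\mid\gcd(s,n)=t$ you in fact establish a marginally stronger statement than the one claimed. Your closing remark is more than an aside: the argument that $S\cup\{0\}$ is a subfield $\bF_{q^{t'}}$, so that $V$ is an $\bF_{q^{t'}}$-space and Lemma~\ref{lemma:CdPolyStructure} forces $t'$ to divide every index carrying a nonzero coefficient (hence $t'\mid\gcd(r,s,n)$), is precisely the structural route suggested by the machinery of \cite{Cyclic} and by this paper's own use of $\bC_g$; your main argument instead is an elementary, self-contained computation inside the cyclic group $\bF_{q^n}^*$. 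The structural version meshes with how the paper later exploits stabilizer subfields, while your gcd computation has the advantage of requiring nothing beyond Lemma~\ref{lemma:cyclicShiftPoly} and basic facts about cyclic groups; either one stands.
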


In \cite{Cyclic} it is shown that subspaces in $\grsmn{q}{n}{r}$, that may be considered as subspaces over a subfield of $\bF_{q^n}$ which is larger than $\bF_q$, admit a unique subspace polynomial structure. In what follows we cite the essentials from~\cite{Cyclic}. 
{For an integer $g$ such that $g|\gcd(n,r)$, let $h$ be any $\bF_{q^g}$ isomorphism between $\bF_{q^g}^{n/g}$ and $\bF_{q^n}$, and notice that for all $u,v\in\bF_{q^g}^{n/g}$ and ${\alpha,\beta\in\bF_{q^g}}$, we have that $h(\alpha v+\beta u)=\alpha h(v)+\beta h(u)$. For $V\in\grsmn{q^g}{n/g}{r/g}$ let $H(V)\triangleq\{h(v)\vert v\in V\}$. The set $H(V)$ is clearly a subspace of dimension $r$ over $\bF_q$ in $\bF_{q^n}$. Furthermore, the function $H:\grsmn{q^g}{n/g}{r/g}\to\grsmn{q}{n}{r}$ is injective since $h$ is injective.}



\begin{construction}\cite[Construction~1]{Cyclic}\label{construction:Cd}
For integers $g,n,$ and $r$ such that $0<r<n$ and $g|\gcd(n,r)$, let\[\bC_g\triangleq\{H(V)\vert V\in\grsmn{q^g}{n/g}{r/g}\}.\]
\end{construction}
{Clearly, for $g=1$ Construction~\ref{construction:Cd} is trivial. Thus, we henceforth assume that $g\ge 2$, i.e., $n$ and $r$ have a non-trivial $\gcd$. }The subspace code $\bC_g$ has minimum subspace distance~$2g$, and it may alternatively be defined as direct sums of cyclic shifts of $\bF_{q^g}$ or as the set of all subspace of $\grsmn{q}{n}{r}$ that are subspaces over $\bF_{q^g}$ as well \cite{Cyclic}. Since~$\bC_g$ is the image of an injective function from $\grsmn{q^g}{n/g}{r/g}$ to $\grsmn{q}{n}{r}$, we have the following. 
\begin{corollary}\cite[Corollary~5]{Cyclic}\label{corollary:CdSize}
$|\bC_g|=\qbin{n/g}{r/g}{q^g}$.
\end{corollary}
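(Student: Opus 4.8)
The plan is to exploit the fact, stated just above the corollary, that $\bC_g$ is exactly the image of the injective map $H:\grsmn{q^g}{n/g}{r/g}\to\grsmn{q}{n}{r}$. An injective function restricts to a bijection onto its image, so $|\bC_g|=|\image H|=|\grsmn{q^g}{n/g}{r/g}|$. Thus the entire content of the corollary reduces to computing the cardinality of a single Grassmannian, which is supplied by the Gaussian coefficient.

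First I would confirm the injectivity claim, since everything hinges on it. By Construction~\ref{construction:Cd}, $H(V)=\{h(v)\mid v\in V\}$, where $h$ is an $\bF_{q^g}$-isomorphism between $\bF_{q^g}^{n/g}$ and $\bF_{q^n}$. Because $h$ is a bijection of the ambient spaces, distinct subspaces $V\ne V'$ of $\bF_{q^g}^{n/g}$ have distinct images $H(V)\ne H(V')$; indeed, $h^{-1}(H(V))=V$ recovers $V$ from $H(V)$, so $H$ has a left inverse and is injective. This is precisely the observation recorded in the paragraph preceding Construction~\ref{construction:Cd}.

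Next I would invoke the size formula for Grassmannians recalled at the start of Section~\ref{section:preliminaries}: for a prime power $Q$ and integers $R\le N$, the number of $R$-dimensional subspaces of $\bF_{Q^N}$ is $\qbin{N}{R}{Q}$. Applying this with $Q=q^g$, $N=n/g$, and $R=r/g$ gives $|\grsmn{q^g}{n/g}{r/g}|=\qbin{n/g}{r/g}{q^g}$. Combining this with the cardinality equality above yields $|\bC_g|=\qbin{n/g}{r/g}{q^g}$, as claimed.

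Since the argument is essentially a counting identity once injectivity is granted, I expect no real obstacle; the only point requiring care is that $n/g$ and $r/g$ are genuine integers, so that the Grassmannian $\grsmn{q^g}{n/g}{r/g}$ is well defined and the Gaussian coefficient $\qbin{n/g}{r/g}{q^g}$ makes sense. This is guaranteed by the divisibility hypothesis $g\mid\gcd(n,r)$ imposed in Construction~\ref{construction:Cd}.
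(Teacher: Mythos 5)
Your proposal is correct and follows exactly the paper's own reasoning: the authors justify the corollary by the single observation that $\bC_g$ is the image of the injective map $H:\grsmn{q^g}{n/g}{r/g}\to\grsmn{q}{n}{r}$, so its size is that of the domain Grassmannian, namely $\qbin{n/g}{r/g}{q^g}$. Your additional checks (the left inverse $h^{-1}$ witnessing injectivity, and the divisibility hypothesis making $n/g$ and $r/g$ integers) are sound elaborations of the same argument.
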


The subspaces in $\bC_g$ admit a unique subspace polynomial structure, from which the results in this paper follow.

\begin{lemma}\cite[Lemma~14]{Cyclic}\label{lemma:CdPolyStructure}
If $V\in\grsmn{q}{n}{r}$ then $V\in \bC_g$ if and only if $P_V(x)=\sum_{i=0}^{r/g}c_i x^{[gi]}$, where ${c_i \in \bF_{q^n}}, {\forall i \in \{0,\dots,r/g\}}$. 
\end{lemma}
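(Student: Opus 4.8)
The plan is to prove the biconditional in Lemma~\ref{lemma:CdPolyStructure} by establishing each direction separately, using the defining property of $\bC_g$ (that its members are exactly those $r$-subspaces of $\bF_{q^n}$ that are also $\bF_{q^g}$-subspaces) together with the characterization of subspace polynomials in Definition~\ref{definition:subspacePoly}. First I would fix the equivalent description of $\bC_g$ noted just before Corollary~\ref{corollary:CdSize}: $V\in\bC_g$ if and only if $V$ is closed under multiplication by $\bF_{q^g}$, i.e. $\alpha v\in V$ for every $\alpha\in\bF_{q^g}$ and $v\in V$. This is the property I will translate into a statement about the coefficients of $P_V$.

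For the forward direction, suppose $V\in\bC_g$, so $V$ is an $\bF_{q^g}$-subspace. I would then consider the map $x\mapsto P_V(\alpha x)$ for a fixed primitive (or generating) element $\alpha\in\bF_{q^g}^*$. Because $V$ is $\bF_{q^g}$-invariant, $\alpha V=V$, and by Lemma~\ref{lemma:cyclicShiftPoly} we have $P_{\alpha V}(x)=\alpha^{[r]}P_V(\alpha^{-1}x)$; since $\alpha V=V$ this forces $P_V(x)=\alpha^{[r]}P_V(\alpha^{-1}x)$. Writing $P_V(x)=\sum_{j=0}^{r}\alpha_j x^{[j]}$ (with $\alpha_r=1$) and comparing coefficients of $x^{[j]}$ on both sides yields $\alpha_j=\alpha^{[r]-[j]}\alpha_j$ for each $j$, i.e. $\alpha_j\bigl(\alpha^{[r]-[j]}-1\bigr)=0$. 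The key number-theoretic fact is that $\alpha^{[r]-[j]}=\alpha^{q^r-q^j}=1$ holds for \emph{every} $\alpha\in\bF_{q^g}^*$ precisely when $(q^g-1)\mid(q^r-q^j)$, which (since $g\mid r$) is equivalent to $g\mid(r-j)$, i.e. $g\mid j$. Hence $\alpha_j=0$ unless $g\mid j$, which is exactly the claimed form $P_V(x)=\sum_{i=0}^{r/g}c_i x^{[gi]}$.

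For the converse, suppose $P_V(x)=\sum_{i=0}^{r/g}c_i x^{[gi]}$. I would show $V$ is $\bF_{q^g}$-invariant, which by the equivalent description places $V$ in $\bC_g$. The cleanest route is to observe that each monomial $x^{[gi]}=x^{q^{gi}}$ is $\bF_{q^g}$-semilinear in a controlled way: for $\alpha\in\bF_{q^g}$ we have $\alpha^{q^{gi}}=\alpha$ because $\alpha$ lies in $\bF_{q^g}$ and raising to $q^g$ fixes $\bF_{q^g}$ pointwise. Therefore $P_V(\alpha v)=\sum_i c_i(\alpha v)^{q^{gi}}=\sum_i c_i\alpha^{q^{gi}}v^{q^{gi}}=\alpha\sum_i c_i v^{q^{gi}}=\alpha P_V(v)$. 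For $v\in V$ we have $P_V(v)=0$, so $P_V(\alpha v)=0$, meaning $\alpha v$ is also a root of $P_V$ and hence lies in $V$ (the root set of $P_V$ is exactly $V$ by condition A3). This proves $\bF_{q^g}$-invariance, and combined with $\dim_{\bF_q}V=r$ it gives $V\in\bC_g$.

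The main obstacle I anticipate is the coefficient-comparison step in the forward direction, specifically justifying that $\alpha^{q^r-q^j}=1$ for all $\alpha\in\bF_{q^g}^*$ forces $g\mid(r-j)$. This requires the observation that the multiplicative group $\bF_{q^g}^*$ is cyclic of order $q^g-1$, so a uniform condition $\alpha^{N}=1$ for all $\alpha$ holds iff $(q^g-1)\mid N$; then one simplifies $q^r-q^j=q^j(q^{r-j}-1)$ and uses $\gcd(q^g-1,q^j)=1$ together with the standard fact $(q^g-1)\mid(q^{r-j}-1)\iff g\mid(r-j)$. One must also take care that it suffices to test invariance under a single generating element $\alpha$ of $\bF_{q^g}^*$ rather than all of $\bF_{q^g}$, which streamlines the argument. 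Everything else is routine comparison of linearized-polynomial coefficients, so the crux is packaging this elementary divisibility lemma correctly.
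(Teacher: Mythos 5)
Your proof is correct; the main thing to note up front is that the paper itself contains no proof of this lemma --- it is imported verbatim from \cite[Lemma~14]{Cyclic} --- so there is no in-paper argument to compare against, only the original one in that reference. Your reconstruction is sound and uses only facts stated in the paper: the forward direction combines the characterization of $\bC_g$ as the set of $r$-subspaces that are also $\bF_{q^g}$-subspaces (stated just before Corollary~\ref{corollary:CdSize}), the shift formula of Lemma~\ref{lemma:cyclicShiftPoly}, uniqueness of subspace polynomials (property A3 of Definition~\ref{definition:subspacePoly}), and the divisibility fact $(q^g-1)\mid q^j(q^{r-j}-1)\iff g\mid (r-j)$; the converse correctly exploits that raising to the power $q^{gi}$ fixes $\bF_{q^g}$ pointwise, so $P_V(\alpha v)=\alpha P_V(v)$ for $\alpha\in\bF_{q^g}$, and that the root set of $P_V$ is exactly $V$. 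Two refinements are worth making. First, the set-theoretic characterization of $\bC_g$ that you treat as its ``defining property'' is itself only asserted (with a citation) in the paper; to be genuinely self-contained you should note that it follows in one line from the $\bF_{q^g}$-linearity and injectivity of $h$: if $V$ is an $\bF_{q^g}$-subspace of $\bF_q$-dimension $r$, then $h^{-1}(V)\in\grsmn{q^g}{n/g}{r/g}$ and $V=H\left(h^{-1}(V)\right)\in\bC_g$. Second, your forward direction can be shortened so that no coefficient comparison, generator of $\bF_{q^g}^*$, or order argument is needed: once $V$ is known to be an $\bF_{q^g}$-subspace of $\bF_{q^g}$-dimension $r/g$, the polynomial $\prod_{v\in V}(x-v)$ is simultaneously $P_V$ and the subspace polynomial of $V$ with respect to the base field $\bF_{q^g}$ (apply A3 with $q^g$ in place of $q$, using $g\mid n$), hence it is $q^g$-linearized and only monomials $x^{[gi]}$ can occur. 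That shortcut is presumably the route of the original proof in \cite{Cyclic}; your cyclic-shift argument costs more work but rests on the same mechanism that underlies Lemma~\ref{lemma:distictCyclic} and is equally valid.
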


Another concept used in our constructions is the notion of $q$-associates. Two polynomials over $\bF_{q^n}$ of the form $\ell(x)=\sum_{i=0}^d\alpha_i x^i$ and $L(x)=\sum_{i=0}^d\alpha_i x^{q^i}$, are called $q$-associates of each other. For any $g\in\bN$, one can similarly define $q^g$-associativity, where $\ell(x)=\sum_{i=0}^d\alpha_i x^i$, and $L(x)=\sum_{i=0}^d\alpha_i x^{q^{gi}}$ are $q^g$-associates of each other. Linearized polynomials over $\bF_q$ are deeply connected to their $q$-associates as follows.

\begin{lemma}\label{lemma:qAssociates}\cite[Theorem 3.62, p. 116]{FiniteFields}
If $L_1(x)$ and $L(x)$ are linearized polynomials over $\bF_q$ with $q$-associates $\ell_1(x)$ and $\ell(x)$, then $L_1(x)$ divides $L(x)$ if and only if $\ell_1(x)$ divides~$\ell(x)$.
\end{lemma}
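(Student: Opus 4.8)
The plan is to exploit the classical fact that the conventional $q$-associate map is a ring isomorphism which converts ordinary multiplication of $q$-associates into composition of linearized polynomials. Write $\ell_1(x)=\sum_i b_i x^i$ and $\ell(x)=\sum_i a_i x^i$, with $L_1(x)=\sum_i b_i x^{q^i}$ and $L(x)=\sum_i a_i x^{q^i}$, and define $\Phi\big(\sum_i c_i x^i\big)\triangleq \sum_i c_i x^{q^i}$. First I would verify the key identity $L_1(L_2(x))=\Phi(\ell_1\ell_2)$ for any pair of linearized polynomials over $\bF_q$: expanding $L_1(L_2(x))=\sum_i b_i\big(\sum_j c_j x^{q^j}\big)^{q^i}=\sum_{i,j} b_i c_j^{q^i} x^{q^{i+j}}$, and here the hypothesis that all coefficients lie in $\bF_q$ is essential, since it forces $c_j^{q^i}=c_j$ and collapses the double sum into $\sum_k\big(\sum_{i+j=k} b_i c_j\big) x^{q^k}=\Phi(\ell_1\ell_2)$. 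Thus $\Phi$ is a bijective ring homomorphism from $\bF_q[x]$ under ordinary multiplication onto the linearized polynomials over $\bF_q$ under composition; in particular this composition is commutative.

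For the \emph{if} direction, assume $\ell_1\mid\ell$ and write $\ell=\ell_1\ell_2$. Applying $\Phi$ and the identity above gives $L(x)=L_1(L_2(x))=L_2(L_1(x))$. Since every linearized polynomial $L_2(y)=\sum_j c_j y^{q^j}$ is divisible by $y$, substituting $y=L_1(x)$ shows $L_1(x)\mid L_2(L_1(x))=L(x)$ as ordinary polynomials, as required.

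For the harder \emph{only if} direction, assume $L_1(x)\mid L(x)$ as ordinary polynomials and transport ordinary polynomial division of the $q$-associates back through $\Phi$. Dividing $\ell$ by $\ell_1$ in $\bF_q[x]$ yields $\ell=\ell_1\cdot\psi+\rho$ with $\deg\rho<\deg\ell_1$; applying $\Phi$ produces $L(x)=\Psi(L_1(x))+R(x)$, where $\Psi\triangleq\Phi(\psi)$ and $R\triangleq\Phi(\rho)$ is linearized with $\deg_q R<\deg_q L_1$. By the \emph{if} argument, $L_1(x)\mid\Psi(L_1(x))$, and by hypothesis $L_1(x)\mid L(x)$, so $L_1(x)\mid R(x)$; but $\deg R<\deg L_1$ forces $R=0$, whence $\rho=0$ and $\ell_1\mid\ell$.

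The main obstacle I anticipate is the bookkeeping around the two notions of degree in the division step: one must check that the remainder $R$ obtained by pushing the ordinary remainder $\rho$ through $\Phi$ is again a genuine linearized polynomial of strictly smaller $q$-degree (so that its ordinary degree is strictly smaller than that of $L_1$), which is exactly what lets the divisibility $L_1\mid R$ be contradicted unless $R=0$. Everything else reduces to the single algebraic identity for composition, whose validity rests entirely on the coefficients lying in the base field $\bF_q$.
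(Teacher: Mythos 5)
Your proof is correct. Note that the paper itself contains no proof of this lemma---it is quoted verbatim from Lidl and Niederreiter (Theorem~3.62, p.~116)---and your argument (establishing that the map $\Phi$ sending $\sum_i c_i x^i$ to $\sum_i c_i x^{q^i}$ is a ring isomorphism from $(\bF_q[x],\cdot)$ onto the linearized polynomials over $\bF_q$ under composition, then handling the ``only if'' direction by pushing ordinary division with remainder through $\Phi$ and using the degree bound to kill the remainder) is essentially the standard proof given in that cited source, so there is nothing to reconcile between your route and the paper's.
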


\section{Sets of Subspaces Polynomials with Mutual Top Coefficients}\label{section:setOfSSP}

In \cite{SubspacePolynomialsAndLimits} (resp. \cite{BoundsOnListDecodingOfRankMetric}) it was shown that sets of subspace polynomials that agree on many of their top coefficients provide a bound on the list decodability of Reed--Solomon (resp. Gabidulin) codes. By Lemma~\ref{lemma:CdPolyStructure} it is evident that all subspace polynomials of subspaces in $\bC_g$ agree on their topmost $g$ coefficients $(1,0,\ldots,0)$. Using a counting argument we may prove the existence of a subset of $\bC_g$ whose corresponding subspace polynomials agree on a larger number of top coefficients.

\begin{theorem} \label{theorem:MainClaimCounting}
If $g,n,$ and $r$ are integers such that $0<r< n$, $g\vert \gcd(r,n)$, and~$\ell$ is the unique non-negative integer such that $r = n-g(\ell+1)$, then there exists a subset of $\bC_g$ of size at least
\[\frac{\qbin{n/g}{r/g}{q^g}}{q^{n\ell}},\]whose subspace polynomials agree on their topmost $g(\ell+1)$ coefficients.
\end{theorem}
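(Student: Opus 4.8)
The plan is to obtain the subset by a single application of the pigeonhole principle to $\bC_g$, exploiting the rigid coefficient pattern of its subspace polynomials furnished by Lemma~\ref{lemma:CdPolyStructure}. First I would recall that every $V\in\bC_g$ has $P_V(x)=\sum_{i=0}^{r/g}c_i x^{[gi]}$ with $c_i\in\bF_{q^n}$, and that monicity forces $c_{r/g}=1$. Hence the only exponents at which any member of $\bC_g$ can carry a nonzero coefficient are the multiples $[0],[g],[2g],\ldots,[r]$; at every other position the coefficient vanishes identically across the whole of $\bC_g$. This is exactly the feature that lets a counting argument produce agreement on a large block of top coefficients.

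Next I would identify which coefficients are fixed by requiring agreement on the topmost $g(\ell+1)$ positions, i.e.\ on the coefficients of $x^{[r]},x^{[r-1]},\ldots,x^{[r-g(\ell+1)+1]}$. Since $r$ is a multiple of $g$ and the defining relation $r=n-g(\ell+1)$ makes this a window of exactly $g(\ell+1)$ consecutive exponents, the only exponents inside it that are multiples of $g$ are $r,r-g,\ldots,r-g\ell$ — precisely $\ell+1$ of them — corresponding to the coefficients $c_{r/g},c_{r/g-1},\ldots,c_{r/g-\ell}$, while every remaining position in the window carries one of the forced zeros. Consequently two members of $\bC_g$ agree on all $g(\ell+1)$ top coefficients exactly when they agree on the $\ell$ coefficients $c_{r/g-1},\ldots,c_{r/g-\ell}$, the leading coefficient $c_{r/g}=1$ being common to all of them.

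I would then partition $\bC_g$ into classes according to the value of the tuple $(c_{r/g-1},\ldots,c_{r/g-\ell})\in\bF_{q^n}^{\ell}$. Each of these $\ell$ coordinates lies in $\bF_{q^n}$, so there are at most $(q^n)^{\ell}=q^{n\ell}$ classes; note that this upper bound requires no independence or surjectivity of the $c_i$, merely that each ranges in $\bF_{q^n}$. By Corollary~\ref{corollary:CdSize} we have $|\bC_g|=\qbin{n/g}{r/g}{q^g}$, so by pigeonhole some class contains at least $\qbin{n/g}{r/g}{q^g}/q^{n\ell}$ subspaces, and by construction their subspace polynomials all agree on the topmost $g(\ell+1)$ coefficients. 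This class is the desired subset.

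Since the whole proof is a one-shot counting bound, the only place demanding care is the bookkeeping of the second step: verifying that the window of $g(\ell+1)$ top exponents fixed by $r=n-g(\ell+1)$ contains \emph{exactly} $\ell+1$ multiples of $g$, hence exactly $\ell$ varying coefficients beyond the monic one. I expect the main (though modest) obstacle to be confirming this indexing stays within the legitimate range $\{0,1,\ldots,r\}$ of coefficient positions; once the window and its $\ell+1$ active slots are pinned down correctly, the pigeonhole estimate is immediate and yields the stated bound.
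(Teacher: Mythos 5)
Your proposal is correct and follows essentially the same route as the paper: invoke Lemma~\ref{lemma:CdPolyStructure} to see that members of $\bC_g$ can only have nonzero coefficients at exponents $[gi]$, partition $\bC_g$ by the $\ell$ varying top coefficients at positions $r-g,\ldots,r-g\ell$ (the leading one being $1$ by monicity), and apply pigeonhole with $|\bC_g|=\qbin{n/g}{r/g}{q^g}$ from Corollary~\ref{corollary:CdSize}. Your bookkeeping of why exactly $\ell$ coefficients vary inside the window of $g(\ell+1)$ top positions is in fact slightly more explicit than the paper's own one-line justification.
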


\begin{proof}
Consider the set of all subspace polynomials of subspaces in $\bC_g$ (Construction \ref{construction:Cd}). {Lemma~\ref{lemma:CdPolyStructure} implies that these polynomials have zero coefficients for all monomials $x^{[j]}$ such that $g\nmid j$. } Hence, they may be partitioned into $q^{n\ell}$ subsets according to their $\ell+1$ top coefficients which correspond to monomials whose $q$-degree is divisible by~$g$. According to the pigeonhole principle, there exists a subset of size at least~$\qbin{n/g}{r/g}{q^g}/q^{n\ell}$ whose polynomials agree on their top $g(\ell+1)$ coefficients.
\end{proof}

Notice that for $g=1$, Theorem~\ref{theorem:MainClaimCounting} reduces to the ordinary counting argument employed by \cite{SubspacePolynomialsAndLimits} and \cite{BoundsOnListDecodingOfRankMetric}. In addition, the case where $n-r=g(\ell+1)\ge r$, in which the polynomials in the set agree on \textit{all} coefficients, is also trivial, since it merely implies the existence of a set of size one. Hence, this theorem is applicable only when $r>n/2$. 

The notion of $q^g$-associativity, together with Lemma \ref{lemma:cyclicShiftPoly}, allows us to construct an \textit{explicit} large set of subspace polynomials. It will also be noted that in certain cases, this set of polynomials corresponds to the entire set $\bC_g$. The construction is based on the following lemma.

\begin{lemma}\label{lemma:explicitqAssociate}
If $g,s,$ and $r$ are integers such that $gs|r$ and $n\triangleq r+gs$, then the polynomial $P(x)\triangleq \sum_{i=0}^{n/gs-1}x^{[igs]}$ is a subspace polynomial with respect to $\bF_{q^n}$.
\end{lemma}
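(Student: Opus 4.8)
The plan is to verify condition A1 of Definition~\ref{definition:subspacePoly}, namely that $P$ divides $x^{[n]}-x$, since $P$ is evidently monic (its leading term is $x^{[r]}$, as $igs=r$ when $i=n/gs-1$) and linearized over $\bF_q$ (all its coefficients lie in $\{0,1\}\subseteq\bF_q$). The entire argument then reduces to a divisibility of ordinary polynomials through the $q$-associate correspondence of Lemma~\ref{lemma:qAssociates}.

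First I would translate both sides into their $q$-associates. The $q$-associate of $x^{[n]}-x=x^{q^n}-x^{q^0}$ is $y^n-1$, and the $q$-associate of $P$ is $p(y)\triangleq\sum_{i=0}^{n/gs-1}y^{igs}$. By Lemma~\ref{lemma:qAssociates}, establishing $P\mid x^{[n]}-x$ is equivalent to establishing $p(y)\mid y^n-1$, where this is ordinary polynomial division.

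Next I would recognize $p(y)$ as a geometric series in the variable $z\triangleq y^{gs}$. Writing $m\triangleq r/(gs)$, so that the hypothesis $gs\mid r$ forces $n=r+gs=gs(m+1)$ and hence $n/gs-1=m$, we have $p(y)=\sum_{i=0}^{m}z^i=(z^{m+1}-1)/(z-1)$. Substituting $z=y^{gs}$ back gives $z^{m+1}-1=y^{gs(m+1)}-1=y^n-1$ and $z-1=y^{gs}-1$, whence $p(y)\cdot(y^{gs}-1)=y^n-1$. This exhibits $p(y)$ explicitly as a factor of $y^n-1$, completing the divisibility and therefore, by Lemma~\ref{lemma:qAssociates}, proving $P\mid x^{[n]}-x$, so that $P$ is a subspace polynomial.

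There is no genuine obstacle in this argument; the only care required is the index bookkeeping confirming that $gs\mid r$ makes the top exponent $igs$ reach exactly $r$ so that the geometric sum telescopes cleanly into $y^n-1$. The main subtlety I anticipate is simply stating correctly that the divisibility furnished by Lemma~\ref{lemma:qAssociates} is ordinary polynomial division rather than symbolic composition of linearized polynomials, so that A1 is obtained directly without any further manipulation.
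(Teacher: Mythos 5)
Your proof is correct and takes essentially the same approach as the paper: both verify condition A1 of Definition~\ref{definition:subspacePoly} by passing through the associate correspondence of Lemma~\ref{lemma:qAssociates} and identifying the associate of $P$ as the geometric-series factor of a binomial. The only difference is cosmetic: you use plain $q$-associates, so that $P$ corresponds to $(y^n-1)/(y^{gs}-1)$, whereas the paper uses $q^g$-associates, so that $P$ corresponds to $(x^{n/g}-1)/(x^s-1)$ dividing $x^{n/g}-1$; your variant even has the small advantage of invoking Lemma~\ref{lemma:qAssociates} exactly as stated, rather than the $q^g$-analogue that the paper only sketches in Section~\ref{section:preliminaries}.
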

\begin{proof}
Since $gs|r$, there exists an integer $\alpha$ such that $gs\alpha=r$, thus $n=gs(\alpha+1)$ and $s|\frac{n}{g}$. It follows that 
\begin{eqnarray*}
\frac{x^{n/g}-1}{x^s-1}=x^{\frac{n}{g}-s}+x^{\frac{n}{g}-2s}+\ldots+1,
\end{eqnarray*}
and hence $(x^{n/g-s}+x^{n/g-2s}+\ldots+1)|(x^{n/g}-1)$. According to Lemma \ref{lemma:qAssociates}, the $q^g$-associates of these polynomials satisfy $\sum_{i=0}^{n/gs-1}x^{[igs]}|(x^{[n]}-x)$, and thus $P$ is a subspace polynomial of an $r$-subspace in $\bF_{q^n}$ by Definition~\ref{definition:subspacePoly}.
\end{proof}

By Lemma \ref{lemma:cyclicShiftPoly} and Lemma \ref{lemma:explicitqAssociate}, we have a large set of subspace polynomials whose coefficients may be given explicitly.

\begin{construction}\label{construction:explicitPolys}
If $g,s,$ and $r$ are integers such that $gs|r$ and $n\triangleq r+gs$, then 
\[
\cZ\triangleq\left\{ \sum_{i=0}^{n/gs-1}\beta ^{[r]-[igs]}x^{[igs]}~\Big|~\beta \in B\right\}
\]
consists of $\frac{q^n-1}{q^{gs}-1}$ subspace polynomials of subspaces in $\grsmn{q}{n}{r}$, where $B$ is any set of nonzero representatives of the orbit of $\bF_{q^{gs}}$.
\end{construction}

\begin{proof}
Since $n=r+gs$ and $gs|r$, it follows that $gs|n$, and thus $\bF_{q^{gs}}$ is a subfield of $\bF_{q^n}$. By Lemma~\ref{lemma:explicitqAssociate}, the polynomial $P_V(x)=\sum_{i=0}^{n/gs-1}x^{[igs]}$ is a subspace polynomial of some $V\in\grsmn{q}{n}{r}$. Let $B$ be any set of representatives of the orbit of $\bF_{q^{gs}}$, that is, a set consisting of a single nonzero element from each subspace in $\{\alpha \bF_{q^{gs}}\vert \alpha\in\bF_{q^n}^*\}$. Since the size of the orbit of $\bF_{q^{gs}}$ is $\frac{q^n-1}{q^{gs}-1}$, and since all subspaces in it intersect trivially \cite[Section III]{ErrorCorrectingCodesInProjectiveSpace}, it follows that $|B|=\frac{q^n-1}{q^{gs}-1}$. By Lemma~\ref{lemma:cyclicShiftPoly}, for all $\beta\in B$ we have that $P_{\beta V}(x)\in \cZ$. We are left to show that if $\beta_1,\beta_2\in B$, then $\beta_1 V\ne \beta_2 V$. 

Assume for contradiction that there exists $\beta_1,\beta_2\in B$ such that $\beta_1 V=\beta_2 V$. It follows that $P_{\beta_1 V}(x)=P_{\beta_2 V}(x)$, and Lemma~\ref{lemma:cyclicShiftPoly} implies that the coefficients of $x$ are equal, that is, $\beta_1^{[n-gs]-1}=\beta_2^{[n-gs]-1}$. Therefore, since every $\alpha\in \bF_{q^n}$ satisfies $\alpha^{q^n}=\alpha$, we have that
\begin{eqnarray*}
\left(\beta_1^{q^{n-gs}-1}\right)^{-q^{gs}}&=&\left(\beta_2^{q^{n-gs}-1}\right)^{-q^{gs}}\\
\beta_1^{q^{gs}-q^n}&=&\beta_2^{q^{gs}-q^n}\\
\beta_1^{q^{gs}-1}&=&\beta_2^{q^{gs}-1}\\
\left(\frac{\beta_1}{\beta_2}\right)^{q^{gs}-1}&=&1.
\end{eqnarray*}
It is widely known (e.g., \cite[Theorem 3.20, p. 91]{FiniteFields}) that the subspace polynomial of $\bF_{q^{gs}}$ is $x^{q^{gs}}-x$, which implies that $\beta_1\beta_2^{-1}\in \bF_{q^{gs}}$, and thus $\beta_1\in \beta_2\bF_{q^{gs}}$. Since $\beta_2\in\beta_2\bF_{q^{gs}}$, it follows that $\beta_1$ and $\beta_2$ belong to the same cyclic shift $\beta_2\bF_{q^{gs}}$, a contradiction.
\end{proof}

Notice that the set $B$ of representatives of $\bF_{q^{gs}}$ (see Construction~\ref{construction:explicitPolys}) may easily be found. For example, if~$\gamma$ is a primitive element of $\bF_{q^n}$, since the set $\{0\}\cup\{\gamma^{i(q^n-1)/(q^{gs}-1)}\}_{i=0}^{q^{gs}-2}$ is $\bF_{q^{gs}}$, it follows that a possible set of representatives of the orbit of $\bF_{q^{gs}}$ is
\[
B\triangleq\left\{\gamma^i~\Big|~ 0\le i\le \frac{q^n-1}{q^{gs}-1}-1\right\}.
\]

\begin{remark}\label{corollary:explicitCd}
For $s=1$, the set $\cZ$ from Construction \ref{construction:explicitPolys} consists of all subspace polynomials of subspaces in~$\bC_g$ (see Construction~\ref{construction:Cd}). This is since the number of cyclic shifts of $\bF_{q^g}$ is $\frac{q^n-1}{q^g-1}$ and the size of $\bC_g$ is $\qbin{n/g}{r/g}{q^g}=\qbin{n/g}{n/g-1}{q^g}=\frac{q^n-1}{q^g-1}$.
\end{remark}

In Section~\ref{section:improved}, we consider subspace polynomials over $\bF_{q^n}$ as polynomials over an extension field $\bF_{q^m}$ of $\bF_{q^n}$. In order to use the above claims over $\bF_{q^m}$, the following formal lemma is required. The proof of this lemma is an immediate corollary of the existence of an injective homomorphism $\phi:\bF_{q^n}\to \bF_{q^m}$.

\begin{lemma}\label{lemma:formalLemma}
Let $P_V(x)=x^{[r]}+\sum_{j=0}^{r-1}v_jx^{[j]}$ and $P_U(x)=x^{[r]}+\sum_{j=0}^{r-1}u_jx^{[j]}$ be two subspace polynomials of subspaces in $\grsmn{q}{n}{r}$, and let $\bF_{q^m}$ be an extension field of $\bF_{q^n}$. If we consider $P_V,P_U$ as polynomials $P_{V'},P_{U'}$ over $\bF_{q^m}$, i.e.,
\begin{eqnarray*}
P_V'(x)=x^{[r]}+\sum_{j=0}^{r-1}v_j'x^{[j]}\\
P_U'(x)=x^{[r]}+\sum_{j=0}^{r-1}u_j'x^{[j]}
\end{eqnarray*}
where the coefficients are in $\bF_{q^m}$, then for all ${j\in\{0,\ldots,r-1\}}$, $v_j=u_j$ if and only if $v_j'=u_j'$. Furthermore, the polynomials $P_{V'},P_{U'}$ are subspace polynomials in $\grsmn{q}{m}{r}$.
\end{lemma}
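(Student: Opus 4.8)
The plan is to realize the passage from $\bF_{q^n}$ to $\bF_{q^m}$ as the natural field embedding $\phi:\bF_{q^n}\hookrightarrow\bF_{q^m}$, which exists precisely because $\bF_{q^m}$ is an extension of $\bF_{q^n}$ (equivalently $n\mid m$), and then to push everything through $\phi$ coefficientwise. Concretely, I would set $v_j'=\phi(v_j)$ and $u_j'=\phi(u_j)$ for all $j$, so that $P_{V'}$ and $P_{U'}$ are exactly the images of $P_V$ and $P_U$ under the induced ring homomorphism $\bF_{q^n}[x]\to\bF_{q^m}[x]$ that acts as $\phi$ on coefficients and fixes $x$.

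The first assertion is then immediate: since $\phi$ is injective, $v_j=u_j$ holds if and only if $\phi(v_j)=\phi(u_j)$, i.e.\ if and only if $v_j'=u_j'$. No input beyond injectivity is needed here.

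For the second assertion I would invoke characterization A1 of Definition~\ref{definition:subspacePoly}. Because $P_V$ is a subspace polynomial with respect to $\bF_{q^n}$, it divides $x^{[n]}-x$ over $\bF_{q^n}$, say $x^{[n]}-x=P_V(x)Q(x)$ with $Q\in\bF_{q^n}[x]$. Applying the ring homomorphism induced by $\phi$ (which fixes the coefficients $0$ and $1$ appearing in $x^{[n]}-x$) yields $x^{[n]}-x=P_{V'}(x)\,\phi(Q)(x)$ in $\bF_{q^m}[x]$, so $P_{V'}\mid x^{[n]}-x$ over $\bF_{q^m}$. Since $n\mid m$, the standard divisibility $x^{[n]}-x\mid x^{[m]}-x$ holds, whence $P_{V'}\mid x^{[m]}-x$; by A1 read now with respect to $\bF_{q^m}$, this makes $P_{V'}$ a subspace polynomial with respect to $\bF_{q^m}$, and the identical argument applies to $P_{U'}$. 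Finally, $\phi(1)=1$ keeps the leading term $x^{[r]}$ monic, so the $q$-degree is unchanged; hence by characterization A3 the roots of $P_{V'}$ form a subspace of dimension $r$ in $\bF_{q^m}$, i.e.\ $P_{V'}\in\grsmn{q}{m}{r}$, and likewise for $P_{U'}$.

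The content of the lemma is entirely formal, so I do not anticipate a genuine obstacle; the only points deserving care are verifying that a coefficientwise field embedding induces a divisibility-preserving ring homomorphism on polynomials, and recording the elementary fact that $x^{[n]}-x$ divides $x^{[m]}-x$ whenever $n\mid m$. It is precisely this fact that transports the A1 characterization from $\bF_{q^n}$ to $\bF_{q^m}$, while the preserved $q$-degree guarantees that the resulting subspace again has dimension $r$.
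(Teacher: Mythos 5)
Your proposal is correct and follows essentially the same route as the paper, which disposes of the lemma in a single sentence by declaring it an immediate corollary of the existence of an injective homomorphism $\phi:\bF_{q^n}\to\bF_{q^m}$ --- exactly the map around which your argument is built. Your write-up merely supplies the details the paper leaves implicit (injectivity for the coefficient equivalence, and the divisibility chain $P_{V'}\mid x^{[n]}-x\mid x^{[m]}-x$ via characterization A1 for the second assertion), all of which are sound.
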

Notice that generalizing Lemma~\ref{lemma:formalLemma} to the case where $\bF_{q^m}$ is not an extension field of $\bF_{q^n}$, i.e. $U$ and $V$ are subspaces in $\bF_{q^m}$ which are contained in a subspace of dimension $n$, is not clear. However, such a generalization is necessary to use our techniques to bound the list size for any $m \geq n$.

\section{Improved Bounds on List Decodability of Gabidulin Codes}\label{section:improved}
We begin by formally defining Gabidulin codes, which are rank-metric codes that attain a \textit{Singleton}-like bound. Any rank-metric code over $\bF_{q^m}$ of length $n$, minimum rank distance $d$, and size $M$ satisfies $M\le q^{m(n-d+1)}$~\cite{Delsarte,MaximumRankArrayCodes}. 
For a linear rank-metric code of dimension $k$, this bound implies that $d\le n-k+1$. Codes which attain this bound are called \textit{maximum rank distance} (MRD) codes. It can be shown that Gabidulin codes, defined below, are linear MRD codes, attaining ${d=n-k+1}$.

\begin{definition}\cite{TheoryOfCodesWithMaximumRankDistance}
A linear Gabidulin code $\Gab[n,k]$ over $\bF_{q^m}$, length $n \leq m$, and dimension $k\le n$ is the set
\begin{eqnarray*}
\Gab[n,k]\triangleq \left\{\left(P(\alpha_1),\ldots,P(\alpha_n)\right)|\deg_q P< k\right\},
\end{eqnarray*}
where $P$ traverses all $q$-degree restricted linearized polynomials, and $\alpha_1,\ldots,\alpha_n$ are some fixed elements of $\bF_{q^m}$ which are linearly independent over $\bF_q$.
\end{definition}

In \cite{BoundsOnListDecodingOfRankMetric} it was shown that large sets of subspace polynomials that agree on many top coefficients may be used to show the limits of list decoding of Gabidulin codes. For the lack of knowledge about the structure of the coefficients of subspace polynomials, a counting argument was later applied to show the existence of such a set. The resulting bound on list decoding of Gabidulin codes is cited below. In what follows, for $w\in\bF_{q^m}^n$ and $\tau\in\bN$, let $B_\tau(w)\triangleq \{c~\big|\rank(w-c)\le\tau\}$, that is, a ball of radius $\tau$ centered at~$w$.

\begin{theorem}\cite[Theorem 1]{BoundsOnListDecodingOfRankMetric}\label{theorem:countingAntonia}
Consider the code $\Gab[n,k]$ over $\bF_{q^m}$, with $d=n-k+1$. If $\tau<d$, then there exists a word $w\in\bF_{q^m}^n$ such that
\[
\left|\Gab[n,k]\cap B_\tau(w)\right|\ge \frac{\qbin{n}{n-\tau}{q}}{\left(q^m\right)^{n-\tau-k}}.
\]
\end{theorem}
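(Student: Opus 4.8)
The plan is to realize many codewords near a single well-chosen center $w$ as the evaluations of subspace polynomials, by splitting each such polynomial into a high-degree part and a low-degree part. Write $N\triangleq n-\tau$; since $\tau<d=n-k+1$ we have $N\ge k$, so the indices below make sense. Let $W\triangleq\Span{\alpha_1,\ldots,\alpha_n}$ be the $n$-dimensional $\bF_q$-subspace of $\bF_{q^m}$ spanned by the evaluation points, and consider the $N$-dimensional subspaces $V\subseteq W$. Each such $V$ has a subspace polynomial $P_V(x)=x^{[N]}+\sum_{j=0}^{N-1}a_j x^{[j]}$ with coefficients in $\bF_{q^m}$, since all its roots lie in $W\subseteq\bF_{q^m}$. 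I would split $P_V=H_V+L_V$, where $H_V(x)=x^{[N]}+\sum_{j=k}^{N-1}a_j x^{[j]}$ collects the coefficients of $q$-degree at least $k$ and $L_V(x)=\sum_{j=0}^{k-1}a_j x^{[j]}$ collects those of $q$-degree below $k$. Because $\deg_q L_V<k$, the word $c_V\triangleq(L_V(\alpha_1),\ldots,L_V(\alpha_n))$ is a codeword of $\Gab[n,k]$.

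Next I would fix the center. Group the $N$-subspaces of $W$ according to their top coefficients $(a_k,\ldots,a_{N-1})\in(\bF_{q^m})^{N-k}$, so that within a single group the high part $H\triangleq H_V$ is one fixed polynomial common to all members. Fixing such a group, set $w\triangleq(-H(\alpha_1),\ldots,-H(\alpha_n))$, a single word serving every $V$ in the group. Then for each such $V$,
\[
c_V-w=\bigl(L_V(\alpha_i)+H(\alpha_i)\bigr)_{i=1}^{n}=\bigl(P_V(\alpha_i)\bigr)_{i=1}^{n}.
\]
The rank of this word equals $\dim_{\bF_q}\Span{P_V(\alpha_1),\ldots,P_V(\alpha_n)}$. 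Viewing $P_V$ as the $\bF_q$-linear endomorphism of $\bF_{q^m}$ whose kernel is exactly $V$, and using $V\subseteq W$ so that $W\cap V=V$, rank--nullity for $P_V|_W$ gives $\dim_{\bF_q}P_V(W)=n-N=\tau$. Since every $P_V(\alpha_i)$ lies in $P_V(W)$, I conclude $\rank(w-c_V)=\rank(c_V-w)\le\tau$, i.e. $c_V\in B_\tau(w)$.

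Finally I would count distinct codewords. The $N$-subspaces of the $n$-dimensional space $W$ number $\qbin{n}{N}{q}=\qbin{n}{n-\tau}{q}$, while there are $(q^m)^{N-k}=(q^m)^{n-\tau-k}$ possible tuples of top coefficients, so the pigeonhole principle supplies a group of size at least $\qbin{n}{n-\tau}{q}/(q^m)^{n-\tau-k}$. Within this group the map $V\mapsto c_V$ is injective: if $c_V=c_{V'}$ then $L_V$ and $L_{V'}$ agree on all of $W$, so their difference is a linearized polynomial of $q$-degree $<k\le n$ vanishing on the $n$-dimensional space $W$ and hence identically zero; as $H$ is shared across the group this forces $P_V=P_{V'}$, and therefore $V=V'$. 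This produces the asserted number of codewords inside $B_\tau(w)$.

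The step I expect to be the main obstacle is the rank computation $\dim_{\bF_q}P_V(W)=\tau$, which is the heart of the argument: it depends on identifying the kernel of the $\bF_q$-linear map $P_V$ with $V$ and on the inclusion $V\subseteq W$. The degree bookkeeping in the split $P_V=H_V+L_V$ must also be handled carefully, in particular verifying that precisely $N-k=n-\tau-k$ top coefficients are being fixed and that $L_V$ always yields a legitimate codeword; getting these counts exactly right is what lands the stated bound. By comparison, the injectivity step is routine once the constraint $k\le n$ is invoked.
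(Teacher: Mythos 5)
Your proof is correct and takes essentially the same approach the paper relies on: this statement is quoted from the reference \cite{BoundsOnListDecodingOfRankMetric}, and its argument---which the paper reuses almost verbatim in its own Theorem~\ref{theorem:GabidulinCounting}---is exactly your skeleton: pigeonhole the subspace polynomials of the $(n-\tau)$-dimensional subspaces of $\Span{\alpha_1,\ldots,\alpha_n}$ by their $n-\tau-k$ free top coefficients, take the center to be the evaluation of the common high part, take the codewords to be the evaluations of the low-degree parts, and compute $\rank(c_V-w)=\dim P_V(W)=\tau$ via the identification $\ker P_V=V$ and rank--nullity. Your degree bookkeeping, injectivity step, and count all check out, so there is no gap.
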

As a result, the following bound is achieved. 
\begin{corollary}\cite[Section III]{BoundsOnListDecodingOfRankMetric}\label{corollary:LDradiusAntonia}
The code $\Gab[n,k]$ over $\bF_{q^m}$, with $d=n-k+1$ cannot be list decoded efficiently for any list decoding radius 
\[
\tau\ge\frac{m+n}{2}-\sqrt{\frac{(m+n)^2}{4}-m(d-\varepsilon)},
\]
for any fixed $0\le\varepsilon<1$.
\end{corollary}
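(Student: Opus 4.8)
The plan is to feed the lower bound from Theorem~\ref{theorem:countingAntonia} into the elementary estimate $\qbin{n}{n-\tau}{q}\ge q^{\tau(n-\tau)}$ for the Gaussian coefficient (stated in Section~\ref{section:preliminaries}), and then read off the range of $\tau$ for which the resulting exponent grows with $m$. First I would combine the two bounds: Theorem~\ref{theorem:countingAntonia} produces, for every $\tau<d$, a word $w$ with
\[
\left|\Gab[n,k]\cap B_\tau(w)\right|\ge q^{E(\tau)},\qquad E(\tau)\triangleq\tau(n-\tau)-m(n-\tau-k).
\]
Substituting $k=n-d+1$ then collapses the exponent to a single concave quadratic in $\tau$,
\[
E(\tau)=-\tau^2+(m+n)\tau-m(d-1).
\]

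The core of the argument is to locate the $\tau$ for which $E(\tau)$ is not merely positive but \emph{grows with} $m$. Fix $0\le\varepsilon<1$ and consider the auxiliary quadratic $f(\tau)\triangleq\tau^2-(m+n)\tau+m(d-\varepsilon)$, whose smaller root is precisely the claimed threshold
\[
\tau_-\triangleq\frac{m+n}{2}-\sqrt{\frac{(m+n)^2}{4}-m(d-\varepsilon)}.
\]
Its discriminant is nonnegative because $(m+n)^2\ge 4mn\ge 4m(d-\varepsilon)$, using $d\le n$. From $f(\tau_-)=0$ one gets $-\tau_-^2+(m+n)\tau_-=m(d-\varepsilon)$, and plugging this into $E$ gives the clean identity $E(\tau_-)=m(d-\varepsilon)-m(d-1)=m(1-\varepsilon)$. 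Since $E$ is concave with vertex at $\tfrac{m+n}{2}>\tau_-$ and $E(d)=d(n-d)+m\ge m$, the minimum of $E$ over the interval $[\tau_-,d]$ is attained at an endpoint and is therefore at least $m(1-\varepsilon)$. Hence for every integer $\tau$ with $\tau_-\le\tau<d$ the list has size at least $q^{m(1-\varepsilon)}$.

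Finally I would conclude by noting that $\varepsilon<1$ is fixed, so $q^{m(1-\varepsilon)}$ is exponential in $m$, and no decoder running in time polynomial in $n,m,\log q$ can even write down an output list of that size; this precludes efficient list decoding at each such radius. For radii $\tau\ge d$ the same $w$ works by monotonicity, since $B_{d-1}(w)\subseteq B_\tau(w)$ keeps the intersection at least as large, so the conclusion extends to all $\tau\ge\tau_-$. I expect the only real obstacle to be the bookkeeping around $\varepsilon$: one must check that the interval $[\tau_-,d)$ is nonempty and respects the hypothesis $\tau<d$ of Theorem~\ref{theorem:countingAntonia}, which follows from $f(d)=d(d-n)-m\varepsilon\le 0$ (whence $\tau_-\le d$), and one must interpret $\varepsilon$ correctly as trading a slightly larger threshold radius for a \emph{guaranteed exponential}, rather than merely positive, exponent—so that the degenerate $d-1$ bound is recovered only in the useless limit $\varepsilon\to 1$, where the guaranteed list size degrades to $q^0=1$.
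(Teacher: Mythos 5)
Your proposal is correct and follows essentially the same route the paper intends: the paper presents this corollary as a direct consequence of Theorem~\ref{theorem:countingAntonia} (citing the derivation in Section III of the reference), namely lower-bounding $\qbin{n}{n-\tau}{q}$ by $q^{\tau(n-\tau)}$ and analyzing when the resulting exponent $-\tau^2+(m+n)\tau-m(d-1)$ is large, which is exactly your quadratic argument. Your additional bookkeeping (the identity $E(\tau_-)=m(1-\varepsilon)$, the check $f(d)\le 0$ giving $\tau_-\le d$, and the concavity/endpoint step) correctly fills in the details that the paper leaves to the cited reference.
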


For $n=m$, this bound simplifies to 
\[\tau\ge n-\sqrt{n(n-d+\varepsilon)},\] 
which may be seen as the rank-metric equivalent of the Johnson radius~\cite{AlgorithmicResults}, and for $\varepsilon=0$ it is equal to the Hamming-metric Johnson radius.

By Lemma~\ref{lemma:CdPolyStructure}, in certain cases there exists a large set of subspace polynomials with a unique coefficient structure. Restricting the counting argument used in the proof of Theorem~\ref{theorem:countingAntonia} to the set $\bC_g$ (Theorem~\ref{theorem:MainClaimCounting}) provides a bound which may outperform Corollary~\ref{corollary:LDradiusAntonia}. The proof of the following theorem is illustrated in Fig. \ref{figure:proof}, {and its consequences are discussed in the sequel}.

\begin{figure}
\begin{center}
\definecolor{yqqqqq}{rgb}{0.501960784314,0.,0.}
\definecolor{aqaqaq}{rgb}{0.627450980392,0.627450980392,0.627450980392}
\begin{tikzpicture}[line cap=round,line join=round,>=triangle 45,x=1.0cm,y=1.0cm]
\clip(1.7,1.5) rectangle (10.3,9.1);
\draw [color=aqaqaq,fill=aqaqaq,fill opacity=0.25] (6.,5.) circle (0.75cm);
\draw [line width=2.pt,color=yqqqqq,fill=yqqqqq,fill opacity=0.25] (3.5,5.5) circle (0.75cm);
\draw [line width=2.pt,color=yqqqqq,fill=yqqqqq,fill opacity=0.25] (4.,3.5) circle (0.75cm);
\draw [line width=2.pt,color=yqqqqq,fill=yqqqqq,fill opacity=0.25] (6.,2.5) circle (0.75cm);
\draw [line width=2.pt,color=yqqqqq,fill=yqqqqq,fill opacity=0.25] (8.,3.5) circle (0.75cm);
\draw (6,5) node[] {\Large $c_R$};
\draw (8.3,5.3) node[] {\Huge $ \tau$};
\draw (6,2.5) node[] {\Large $c_{R-P_{1,\beta}}$};
\draw (4,3.5) node[] {\Large $c_{R-P_{2,\beta}}$};
\draw (3.5,5.5) node[] {\Large $c_{R-P_{3,\beta}}$};
\draw (8,3.5) node[] {\Large $c_{R-P_{t,\beta}}$};
\draw(6.,5.) circle (3.47030573843cm);
\draw [->] (6.75,5) -- (9.48,5);
\draw [->] (9.48,5) -- (6.75,5);
\draw [shift={(6.,5.)},dash pattern=on 2pt off 2pt]  plot[domain=-0.303722810845:2.56875222881,variable=\t]({1.*2.68945288687*cos(\t r)+0.*2.68945288687*sin(\t r)},{0.*2.68945288687*cos(\t r)+1.*2.68945288687*sin(\t r)});
\end{tikzpicture}
\caption{An illustration of the proof of Theorem~\ref{theorem:GabidulinCounting}. The proof of Theorem~\ref{theorem:GabidulinExplicit} is~similar.
The ball around $c_R$ of radius $\tau$ contains the words $c_{R-P_{i,\beta}}$ for $P_{i,\beta} \in \mathcal{P_\beta}$, where $|\mathcal{P}_\beta| = \qbin{n/g}{(n-\tau)/g}{q^g}\Big/q^{n\ell}$.
}\label{figure:proof}
\end{center}
\end{figure}
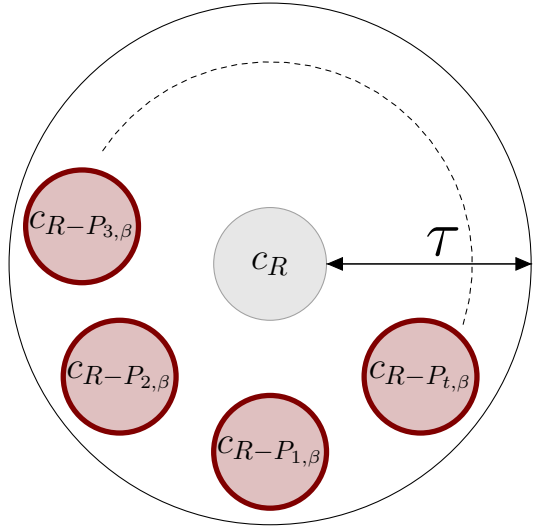

\begin{theorem}\label{theorem:GabidulinCounting}
	For integers $k\le n\le m$ such that $n$ divides $m$, let $\Gab[n,k]$ be a {linear} Gabidulin code over $\bF_{q^m}$, with $d=n-k+1$ and evaluation points $\alpha_1,\ldots,\alpha_n \in\beta\bF_{q^n}$ for some~$\beta\in\bF_{q^m}^*$. Let~$\tau,g$ be integers such that $\floor{\frac{d-1}{2}}+1\le\tau\le d-1$, $g\ge 2$, and $g|\gcd(n-\tau,n)$. If $\ell$ is the unique integer such that $n=n-\tau+g(\ell+1)$ (and thus, $\tau=g(\ell+1)$), then there exists a word $c_R\in\bF_{q^m}^n\setminus \Gab[n,k]$ such that 
\begin{align}\label{equation:countingBound}
\left|\Gab[n,k]\cap B_\tau(c_R)\right|\ge \frac{\qbin{n/g}{(n-\tau)/g}{q^g}}{q^{n\ell}}.
\end{align}
\end{theorem}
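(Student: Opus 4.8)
The plan is to turn the counting result of Theorem~\ref{theorem:MainClaimCounting} into a dense ball of codewords, via the standard correspondence between evaluation vectors and kernels of linearized maps, and to use a cyclic shift by $\beta$ to align the subspaces of $\bC_g$ with the span of the evaluation points.

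First I would set up the dictionary. Let $W\triangleq\Span{\alpha_1,\ldots,\alpha_n}$; since the $\alpha_i$ are $\bF_q$-linearly independent and lie in $\beta\bF_{q^n}$, a space of $\bF_q$-dimension $n$, we have $W=\beta\bF_{q^n}$. For a linearized polynomial $Q$ over $\bF_{q^m}$ write $c_Q\triangleq(Q(\alpha_1),\ldots,Q(\alpha_n))$. Then $Q\mapsto c_Q$ is $\bF_{q^m}$-linear, $\rank(c_Q)=\dim_{\bF_q}Q(W)=n-\dim_{\bF_q}(\ker Q\cap W)$, and this map is injective on polynomials of $q$-degree $<n$ (such a polynomial vanishing on the $n$-dimensional $W$ is zero). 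Consequently a word $c_Q$ with $k\le\deg_q Q<n$ is \emph{not} a codeword of $\Gab[n,k]$, a fact I will use to certify that the center lies outside the code.

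Next I would produce the subspaces. Applying Theorem~\ref{theorem:MainClaimCounting} with $r\triangleq n-\tau$---its hypotheses $g\mid\gcd(r,n)$ and $r=n-g(\ell+1)$ are precisely those assumed here---yields a set $S\subseteq\bC_g\subseteq\grsmn{q}{n}{r}$ with $|S|\ge\qbin{n/g}{r/g}{q^g}/q^{n\ell}$ whose subspace polynomials agree on their top $g(\ell+1)=\tau$ coefficients. Viewing these polynomials over the extension field $\bF_{q^m}$ through Lemma~\ref{lemma:formalLemma} preserves both the subspace-polynomial property and every coefficient agreement. I would then cyclically shift by $\beta$: for $V\in S$ the set $\beta V$ is an $r$-dimensional subspace contained in $\beta\bF_{q^n}=W$, and by Lemma~\ref{lemma:cyclicShiftPoly} the coefficient of $x^{[j]}$ in $P_{\beta V}$ is $\beta^{[r]-[j]}$ times that of $P_V$. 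As this scalar depends only on $j$, agreement on the top $\tau$ coefficients is inherited by the shifted family $\{P_{\beta V}:V\in S\}$.

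Finally I would assemble the configuration of Fig.~\ref{figure:proof}. Fix $V_0\in S$, set $R\triangleq P_{\beta V_0}$ and take $c_R$ as the center. For each $V\in S$ the difference $R-P_{\beta V}$ has its top $\tau$ coefficients vanish, so $\deg_q(R-P_{\beta V})\le(n-\tau)-\tau=n-2\tau$; the hypothesis $\tau\ge\floor{(d-1)/2}+1$ forces $2\tau\ge d$, hence $n-2\tau\le k-1<k$ and $c_{R-P_{\beta V}}\in\Gab[n,k]$. By the dictionary, $\rank(c_R-c_{R-P_{\beta V}})=\rank(c_{P_{\beta V}})=n-\dim_{\bF_q}(\ker P_{\beta V}\cap W)=n-\dim_{\bF_q}(\beta V)=\tau$, since the roots of $P_{\beta V}$ form exactly $\beta V\subseteq W$; thus all these codewords lie in $B_\tau(c_R)$. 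Injectivity of $c_{\,\cdot\,}$ on $q$-degree $<n$ makes them pairwise distinct, giving at least $|S|$ codewords, and $c_R$ is not a codeword because $\deg_q R=n-\tau\ge k$ (as $\tau\le d-1=n-k$). I expect the main obstacle to be the bookkeeping around the shift: one must check that multiplying by $\beta\in\bF_{q^m}^*$ simultaneously moves the subspaces inside $W$ and leaves the top-coefficient agreement intact, and that the resulting difference polynomials drop below $q$-degree $k$ exactly in the regime $\tau>\floor{(d-1)/2}$.
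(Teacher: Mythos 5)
Your proposal is correct and follows essentially the same route as the paper's own proof: invoke Theorem~\ref{theorem:MainClaimCounting} with $r=n-\tau$, lift the coefficient agreement to $\bF_{q^m}$ via Lemma~\ref{lemma:formalLemma}, shift by $\beta$ via Lemma~\ref{lemma:cyclicShiftPoly}, and form the ball of evaluations of $R-P_{\beta V}$ around $c_R$, with the same degree and rank computations. The only (harmless) deviations are that you fix $R=P_{\beta V_0}$ where the paper allows any polynomial carrying the mutual top coefficients, and that you spell out explicitly the injectivity of the evaluation map (hence distinctness of the listed codewords and $c_R\notin\Gab[n,k]$), which the paper leaves implicit.
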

\begin{proof}
	According to Theorem~\ref{theorem:MainClaimCounting}, there exists a set $\cP$ of $\qbin{n/g}{(n-\tau)/g}{q^g}/q^{n\ell}$ subspace polynomials of subspaces in~$\grsmn{q}{n}{n-\tau}$, that agree on their topmost $\tau ={g(\ell+1)}$ coefficients. The coefficients of these polynomials are in the field $\bF_{q^n}$. Since $n|m$, we have that $\bF_{q^n}$ is a subfield of~$\bF_{q^m}$, and thus these coefficients may be considered as elements of~$\bF_{q^m}$. Recall that according to Lemma~\ref{lemma:formalLemma}, these polynomials agree on their topmost $\tau$ coefficients also when considered as polynomials over~$\bF_{q^m}$. 
	
	Further, let~$\{V_P\}_{P\in\cP}\subseteq \grsmn{q}{n}{n-\tau}$ be the subspaces which correspond to the subspace polynomials in~$\cP$. For~$P\in\cP$, 
	let~$P_\beta$ be the subspace polynomial of~$\beta V_P$, and let~$\cP_\beta\triangleq\{P_\beta\}_{P\in\cP}$. According to Lemma~\ref{lemma:cyclicShiftPoly}, and according to the properties of~$\cP$, it follows that the polynomials in~$\cP_\beta$ agree on their topmost~$\tau$ coefficients. Since multiplication by~$\beta$ is an injection, it also follows that~$|\cP|=|\cP_\beta|$.
	
	Let $R$ be any linearized polynomial over $\bF_{q^m}$~of $q$-degree $n-\tau$ that has the mutual top coefficients of~$\cP_\beta$, and let $c_R\in\bF_{q^m}^n$ be the word resulting from the evaluation of~$R$ at $\alpha_1,\ldots,\alpha_n$. Similarly, for $P_\beta\in\cP_\beta$ let $c_{R-P_\beta}\in\bF_{q^m}^n$ be the word corresponding to the evaluation of $R-P_\beta$ at $\alpha_1,\ldots,\alpha_n$. 
	
	Since $\deg_q (R-P_\beta)\le n-\tau-g(\ell+1)$ and $\tau =g(\ell+1)> \frac{d-1}{2}=\frac{n-k}{2}$ it follows that $2\tau =\tau+g(\ell+1)> n-k$, and~hence,
	\begin{align*}
	k>  n-\tau-g(\ell+1)\ge\deg_q (R-P_\beta).
	\end{align*}
	
	Therefore, the word $c_{R-P_\beta}$ is a codeword of $\Gab[n,k]$ for all $P_\beta\in\cP_\beta$. In addition, since $\tau\le d-1$ it follows that $\deg_q R=n-\tau\ge n-d+1=k$, and hence $c_R\notin \Gab[n,k]$.
	
	Since every linearized polynomial can be viewed as an $\bF_q$-linear mapping (see Section~I), it follows that for every $P_\beta\in\cP_\beta$,
	\begin{align*}
	\rank(c_R-c_{R-P_\beta})&=\rank((P_\beta(\alpha_1),\ldots,P_\beta(\alpha_n)))\\
	&= \dim\Span{P_\beta(\alpha_1),\ldots,P_\beta(\alpha_n)}\\
	&= \dim P_\beta\left(\Span{\alpha_1,\ldots,\alpha_n}\right)\\
	&= \dim P_\beta(\beta\bF_{q^n}),
	\end{align*}
	where the last equality follows from the fact that $\alpha_1,\ldots,\alpha_n$ are~$n$ linearly independent elements in~$\beta\bF_{q^n}$, a subspace of dimension~$n$.
	Since~$P_\beta$ is a subspace polynomial of~$\beta V_P$, which is a subspace of dimension~$n-\tau$ that is contained in~$\beta\bF_{q^n}$, it follows that $\dim P_\beta(\beta\bF_{q^n})=\tau$.
	Thus, the set $\{c_{R-P_\beta}\}_{P_\beta\in\cP_\beta}\subseteq \Gab[n,k]$ is a set of size $\qbin{n/g}{(n-\tau)/g}{q^g}/q^{n\ell}$, which is contained in  a ball of radius~$\tau$ around the word~$c_R$.
\end{proof}

Notice that the restriction on the parameter $r$, mentioned after the proof of Theorem~\ref{theorem:MainClaimCounting}, implies the necessary condition $r=n-\tau> n/2$, and hence $\tau < n/2$. However, this limitation becomes trivial when discussing $\tau$ which is approximately the unique decoding radius $d/2$, since $d\le n$.

A simple analysis of \eqref{equation:countingBound} shows that 
\begin{eqnarray*}
\left|\Gab[n,k]\cap B_\tau(c_R) \right|&\ge& \frac{\qbin{n/g}{(n-\tau)/g}{q^g}}{q^{n\ell}}\\
&\ge& \frac{(q^g)^{\frac{n-\tau}{g}(\frac{n}{g}-\frac{n-\tau}{g})}}{q^{n\ell}}\\
&=&q^{(n-\tau)\frac{\tau}{g}-n\ell}=q^{\frac{n\tau}{g}-\frac{\tau^2}{g}-n\ell}\\
&=& q^{n(\ell+1)-g(\ell+1)^2-n\ell}\\
&=&q^{n-g(\ell+1)^2}=q^{n-\tau(\ell+1)},
\end{eqnarray*}
and hence, this bound results in a list of exponential size whenever $g(\ell+1)^2<c\cdot n$ for $c\in(0,1)$, or alternatively, when~$\tau < \frac{cn}{\ell+1}$.

The following examples provide infinite sets of Gabidulin codes, with rates from $\frac{1}{5}$ to $1$, that cannot be list decoded efficiently \textit{at all} according to the bound from Theorem~\ref{theorem:GabidulinCounting}. This result strictly outperforms the bound from Corollary \ref{corollary:LDradiusAntonia}, and provides an answer to an open problem by \cite{Ding}, \cite[Section~6]{GuruswamiExplicit}, and \cite[Section~V]{BoundsOnListDecodingOfRankMetric}, that is, there exist Gabidulin codes that cannot be efficiently list decoded beyond the  unique decoding radius.
\begin{example}\label{example:noLD}
Let $n$ be an integer power of 2, and let ${1\le i\le \log n-2}$. For any integer $m$ such that $n|m$, consider a ${\Gab[n,(1-\frac{1}{2^i})n+2]}$ code over $\bF_{q^m}$ with evaluation points that span~$\beta\bF_{q^n}$ for some~$\beta\in\bF_{q^m}^*$, and let $\tau$ be the smallest possible list decoding radius, that is, 
\begin{eqnarray*}
\tau&\triangleq& \bigfloor{\frac{d-1}{2}}+1=\bigfloor{\frac{\frac{n}{2^i}-2}{2}}+1=\frac{n}{2^{i+1}}.
\end{eqnarray*}
Let $g\triangleq \frac{n}{2^{i+1}}=\tau$, and notice that $g\ge 2$. To see that $g|\gcd(n,n-\tau)$, notice that since $n$ is an integer power of~2, it follows that $\tau|n$, and thus $g|n$. In addition, we have that $\tau(2^{i+1}-1)=n-\tau$, thus $\tau|(n-\tau)$ and $g|(n-\tau)$. Therefore, in Theorem~\ref{theorem:GabidulinCounting} we may choose $g=\frac{n}{2^{i+1}},\ell=0$, and get that there exists a word $c_R\in\bF_{q^m}^n$ with $q^{(1-2^{-i-1})n}$ codewords in a ball of radius $\tau$ around it. Since $\tau$ is larger than the unique decoding radius by one, this code cannot be efficiently list decoded at all. {A detailed comparison between this bound and~\cite{BoundsOnListDecodingOfRankMetric} appears in~\nameref{appendix:B}.}

\end{example}

\begin{example}\label{example:ronny}
Let $g,\alpha_n,$ and $\alpha_\tau$ be positive integers such that $\alpha_n\ge \alpha_\tau^2+1$. For $n=\alpha_ng$, $\tau=\alpha_\tau g$, and any integer~$m$ such that $n\vert m$, consider a $\Gab[n,n-2\tau+1]$ code over~$\bF_{q^m}$ with evaluation points that span~$\beta\bF_{q^n}$ for some~$\beta\in\bF_{q^m}^*$, whose minimum distance is $d=2\tau$, and whose rate is
\begin{eqnarray*}
\frac{n-2\tau+1}{n}=1-\frac{2\alpha_\tau}{\alpha_n}+\frac{1}{n}.
\end{eqnarray*}
According to Theorem~\ref{theorem:GabidulinCounting}, there exists a word $c_R$ having
\begin{eqnarray}\label{equation:RonnysExample}
\frac{\qbin{n/g}{(n-\tau)/g}{q^g}}{q^{n\ell}},
\end{eqnarray}
codewords in radius $\tau$ around it, where $\ell=\tau/g-1=\alpha_\tau-1$. Simplifying this expression, we have that 
\begin{eqnarray*}
\frac{\qbin{n/g}{(n-\tau)/g}{q^g}}{q^{n\ell}}&=&\frac{\qbin{\alpha_n}{\alpha_n-\alpha_\tau}{q^g}}{q^{n(\alpha_\tau-1)}}\\
&\ge&\frac{\left(q^g\right)^{(\alpha_n-\alpha_\tau)\alpha_\tau}}{q^{n(\alpha_\tau-1)}}\\
&=&q^{n-\tau\alpha_\tau}=q^{(\alpha_n-\alpha_\tau^2)g}.
\end{eqnarray*}
If $\alpha_\tau$ and $\alpha_n$ are constants then $g=\Omega(n)$ and
$q^{(\alpha_n-\alpha_\tau^2)g}=q^{\Omega(n)}$,
which implies that the list size is exponential in the code length. Since $\tau < n/2$, as mentioned after Theorem~\ref{theorem:GabidulinCounting}, it follows that $\alpha_n > 2\alpha_{\tau}$, and thus we have the following two interesting families of codes.
\begin{enumerate}
\item For $\alpha_n=3$ and $\alpha_\tau =1$ we have the code $\Gab[3g,g+1]$ over any field $\bF_{q^m}$ such that $3g\vert m$, with evaluation points that span~$\beta\bF_{q^{3g}}$ for some~$\beta\in\bF_{q^m}^*$. The rate of this code is $\frac{1}{3}+\frac{1}{n}$, and its minimum distance is $2g$. For the radius $\tau=g$, there exists a word $c_R$ with at least $q^{2g}=q^{\Omega(n)}$ codewords around it, and hence this code cannot be list decoded efficiently at all. 
  
\item For $\alpha_n=5$ and $\alpha_\tau =2$ we have the code $\Gab[5g,g+1]$ over any field $\bF_{q^m}$ such that $5g\vert m$, with evaluation points that span~$\beta\bF_{q^{5g}}$ for some~$\beta\in\bF_{q^m}^*$. The rate of this code is $\frac{1}{5}+\frac{1}{n}$, and its minimum distance is $4g$. For the radius $\tau=2g$, there exists a word $c_R$ with at least $q^{g}=q^{\Omega(n)}$ codewords around it, and hence this code cannot be list decoded efficiently at all.
\end{enumerate}
Clearly, this strategy can be used to construct examples of families with larger code rates, but $\frac{1}{5} + \frac{1}{n}$ {is} the smallest one. This may be seen by considering all integers $\alpha_\tau$ and $\alpha_n$ which comply with the above constraints. {That is, for $\alpha_\tau =1$ and $\alpha_n \ge 4$, the rate is at least $\frac{1}{2}+\frac{1}{n}$, for $\alpha_\tau=2$ and $\alpha_n\ge 6$ the rate is at least $\frac{1}{3}+\frac{1}{n}$, and for any $\alpha_\tau \ge 3$ and any $\alpha_n\ge \alpha_\tau^2+1$ the rate is at least $\frac{1}{3}+\frac{1}{n}$.}
\end{example}

In the following, we present a simple algorithmic way of constructing many dense sets of Gabidulin codewords. These sets also show that the corresponding Gabidulin codes cannot be efficiently list decoded beyond the unique decoding radius. In addition, we have that for certain Gabidulin codes, dense sets of codewords abound and may easily be computed explicitly.

\begin{theorem}\label{theorem:GabidulinExplicit}
Let $g,s,n$, and $m$ be integers such that ${g\ge 2}$, $gs|n$, and $n|m$. Let $\Gab[n,n-2gs+1]$ be a linear Gabidulin code over $\bF_{q^m}$, with $d=2gs$ and evaluation points $\alpha_1,\ldots,\alpha_n\in\beta\bF_{q^n}$ for some $\beta\in\bF_{q^m}^*$. If $\tau\triangleq \floor{\frac{d-1}{2}}+1=gs$, then there exists an (explicitly defined) word $c_R\in\bF_{q^m}^n\setminus\Gab[n,n-2gs+1]$ such that
\[
\left|\Gab[n,n-2gs+1]\cap B_\tau(c_R)\right|\ge\frac{q^n-1}{q^{gs}-1}.
\]
In particular, if $R$ is the polynomial whose evaluation in $\alpha_1,\ldots,\alpha_n$ yields $c_R$, then $\frac{q^n-1}{q^{gs}-1}$ of the codewords in $B_\tau(c_R)$ are given by the evaluations of $\{R-P_\beta\}_{P_\beta\in\cZ_\beta}$ in $\alpha_1,\ldots,\alpha_n$, where $\cZ_\beta$ is the set of subspace polynomials which result from shifting~$\cZ$ (Construction~\ref{construction:explicitPolys}) by~$\beta$.
\end{theorem}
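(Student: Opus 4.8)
The plan is to mirror the proof of Theorem~\ref{theorem:GabidulinCounting}, replacing the pigeonhole-generated set $\cP$ with the \emph{explicit} set $\cZ$ of Construction~\ref{construction:explicitPolys}. First I would set $r\triangleq n-\tau=n-gs$ and check that the hypotheses of Construction~\ref{construction:explicitPolys} are met: since $gs\mid n$ we have $gs\mid(n-gs)=r$, and $n=r+gs$, so the construction applies and produces a set $\cZ$ of $\frac{q^n-1}{q^{gs}-1}$ subspace polynomials of distinct $(n-\tau)$-subspaces of $\bF_{q^n}$.

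The crucial structural observation is that every polynomial in $\cZ$ has the form $\sum_{i=0}^{n/gs-1}\beta'^{[r]-[igs]}x^{[igs]}$, so its only nonzero coefficients sit at monomials $x^{[igs]}$; since $r=(n/gs-1)gs$ is itself a multiple of $gs$, the coefficients of $x^{[r-1]},\ldots,x^{[r-gs+1]}$ all vanish and the leading coefficient is $1$. Hence all members of $\cZ$ agree on their topmost $\tau=gs$ coefficients, namely $(1,0,\ldots,0)$. I would then shift the whole family by the fixed $\beta$ of the theorem to obtain $\cZ_\beta$; by Lemma~\ref{lemma:cyclicShiftPoly} this shift neither creates nor destroys zero coefficients, so the polynomials of $\cZ_\beta$ still share these top $\tau$ coefficients, the corresponding subspaces now lie inside $\beta\bF_{q^n}$, and $|\cZ_\beta|=|\cZ|=\frac{q^n-1}{q^{gs}-1}$ because multiplication by $\beta$ is a bijection.

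From here the argument is exactly that of Theorem~\ref{theorem:GabidulinCounting}. I would take the explicit polynomial $R(x)=x^{[n-\tau]}=x^{[n-gs]}$, which is monic of $q$-degree $n-\tau$ and shares the top $\tau$ coefficients of $\cZ_\beta$, and let $c_R$ be its evaluation at $\alpha_1,\ldots,\alpha_n$. Since the top $\tau$ coefficients of $R$ and of each $P_\beta\in\cZ_\beta$ coincide, $\deg_q(R-P_\beta)\le n-2\tau=n-2gs<k$, so every $c_{R-P_\beta}$ is a codeword of $\Gab[n,n-2gs+1]$; conversely $\deg_q R=n-gs\ge n-2gs+1=k$ forces $c_R\notin\Gab[n,n-2gs+1]$, using that evaluation at $n$ linearly independent points is injective on polynomials of $q$-degree $<n$. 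The distance computation reproduces the chain
\begin{align*}
\rank(c_R-c_{R-P_\beta})=\dim P_\beta\left(\Span{\alpha_1,\ldots,\alpha_n}\right)=\dim P_\beta(\beta\bF_{q^n})=\tau,
\end{align*}
the last equality holding because $P_\beta$ is the subspace polynomial of an $(n-\tau)$-subspace contained in $\beta\bF_{q^n}$, so its image on $\beta\bF_{q^n}$ has dimension $n-(n-\tau)=\tau$. Thus each $c_{R-P_\beta}\in B_\tau(c_R)$, and since distinct $P_\beta$ give distinct polynomials $R-P_\beta$ of $q$-degree $<n$, their evaluations are distinct, yielding $\frac{q^n-1}{q^{gs}-1}$ codewords in $B_\tau(c_R)$.

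The main obstacle --- really the only non-routine point --- is the bookkeeping of the two roles of $\beta$: the running index of Construction~\ref{construction:explicitPolys} versus the fixed shift $\beta$ governing the evaluation points. I would keep this clean by renaming the construction's index (say $\beta'$) and invoking Lemma~\ref{lemma:cyclicShiftPoly} a single time to push the entire family into $\beta\bF_{q^n}$ while simultaneously preserving the shared-top-coefficient property and the cardinality. Everything else (the two degree inequalities, the injectivity of evaluation, and the rank-equals-$\tau$ computation) transfers verbatim from Theorem~\ref{theorem:GabidulinCounting} once $\cZ_\beta$ is substituted for $\cP_\beta$.
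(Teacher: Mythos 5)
Your proposal is correct and takes essentially the same route as the paper's proof: instantiate Construction~\ref{construction:explicitPolys} with $r=n-gs$, shift the family by $\beta$ via Lemma~\ref{lemma:cyclicShiftPoly} to preserve the shared top coefficients $(1,0,\ldots,0)$, and then repeat the degree and rank computations of Theorem~\ref{theorem:GabidulinCounting}. The only differences are cosmetic: you fix the explicit choice $R(x)=x^{[n-gs]}$ where the paper allows any $R$ with those top coefficients, you spell out the distinctness of the evaluations (which the paper leaves implicit), and you handle the passage to $\bF_{q^m}$ directly rather than citing the formality of Lemma~\ref{lemma:formalLemma}.
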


\begin{proof}
	Since $gs|n-gs$, by setting $r=n-gs$ it follows from Construction~\ref{construction:explicitPolys} that the set $\cZ$ is a set of subspace polynomials of subspaces in~$\grsmn{q}{n}{n-gs}$, whose size is $\frac{q^n-1}{q^{gs}-1}$. Since $n|m$, we have that $\bF_{q^n}$ is a subfield of $\bF_{q^m}$, and therefore the polynomials in $\cZ$ may be considered as polynomials over~$\bF_{q^m}$ as well. According to Construction~\ref{construction:explicitPolys} and Lemma~\ref{lemma:formalLemma}, the polynomials in $\cZ$ agree on their topmost $gs$ coefficients $(1,0,\ldots,0)$, even when considered as polynomials over $\bF_{q^m}$. Similar to the proof of Theorem~\ref{theorem:GabidulinCounting}, let $\{V_P\}_{P\in\cZ}$ be the set of subspaces in~$\grsmn{q}{n}{n-gs}$ which corresponds to the subspace polyonmials in~$\cZ$, let $P_\beta$ denote the subspace polynomial of~$\beta V_P$, and let $\cZ_{\beta}\triangleq \{P_\beta\}_{P\in\cZ}$. Clearly, we have that $|\cZ|=|\cZ_\beta|$, and by Lemma~\ref{lemma:cyclicShiftPoly} it follows that the subspace polynomials in~$\cZ_\beta$ agree of their topmost $gs$ coefficients~$(1,0,\ldots,0)$.
	
	Let $R$ be any linearized polynomial of $q$-degree $n-gs$ whose top $gs$ coefficients are 
	$(1,0,\ldots,0)$, and let $c_R\in\bF_{q^m}^n$ be the word resulting from the evaluation of $R$ at $\alpha_1,\ldots,\alpha_n$. For each $P_\beta\in\cZ_\beta$ let $c_{R-P_\beta}\in\bF_{q^m}^n$ be the word corresponding to the evaluation of $R-P_\beta$ at $\alpha_1,\ldots,\alpha_n$. For all $P_\beta\in\cZ_\beta$ we have that $\deg_q (R-P_\beta) \le n-2gs<n-2gs+1$, and thus $c_{R-P_\beta}\in \Gab[n,n-2gs+1]$. In addition, $\deg_q R = n-gs$, and thus $c_R\notin \Gab[n,n-2gs+1]$.
	
	As in the proof of Theorem~\ref{theorem:GabidulinCounting}, for all $P_\beta\in\cZ_\beta$ we have that $\rank(c_R-c_{R-P_\beta})= \dim P_\beta(\beta \bF_{q^n})=gs$. Therefore, the set $\{c_{R-P_\beta}\}_{P_\beta\in\cZ_\beta}$ is a set of $\frac{q^n-1}{q^{gs}-1}$ codewords in ${\Gab[n,n-2gs+1]}$, all of which are of distance at most $\tau=gs$ from $c_R$.
%
%
\end{proof}

Notice that each code in the family of codes mentioned in Theorem~\ref{theorem:GabidulinExplicit} satisfies $d=2gs$, and hence the unique decoding radius is $\floor{ \frac{d-1}{2}}=gs-1$. Furthermore, since $gs|n$, it follows that $gs\le \frac{n}{2}$, and thus the word $c_R$ has $\Omega(q^{n/2})$ codewords in a ball of radius $\tau=\floor{\frac{d-1}{2}}+1$ around it. Hence, this family of Gabidulin codes cannot be list decoded efficiently \textit{at all}.

It is an interesting question if our results can be used to derive a lower bound on the number of words that have an exponentially-sized list of codewords around themselves. If it can be proved that there are just a few just words, we might be able to remove a few codewords of the Gabidulin code to obtain a list decodable code of slightly smaller rate. The code constructed in~\cite{GuruswamiExplicit} seems to be such a list decodable code.

Further, for \emph{folded} Gabidulin codes such a subcode might be easy to find.
The results from~\cite{BartzSidorenko} show that the \emph{average} list size of folded Gabidulin codes is quite small, indicating that there are only a few words with an exponentially-sized list around them.

Finally, the results in this section can be used to prove bounds for punctured Gabidulin codes, which are obtained by removing coordinates from the original code. 
Puncturing a $\Gab[n,k]$ code by $s<n-k+1$ positions yields a $\Gab[n-s,k]$ code.
We can therefore provide lower bounds on list decoding of Gabidulin codes where \emph{$n$ does not divide $m$}.

\begin{lemma}\label{lemma:puncturingBound}
	Let $\cC$ be a $\Gab[n,k]$ code over $\bF_{q^m}$ with minimum distance $d\triangleq n-k+1$, let $s$ be an integer such that $s<d$, and let $\cC_s$ be a $\Gab[n-s,k]$ code which results from $\cC$ by~$s$ puncturing operations, whose minimum distance is ${d'\triangleq n-s-k+1}$. If $\cC$ cannot be list decoded efficiently at all, i.e., there exists a word $w\in\bF_{q^m}^n$ such that
	\[
	\left|\cC\cap B_\tau(w)\right|\ge q^{\Omega(n)}
	\]
	where $\tau\triangleq \floor{\frac{d-1}{2}}+1$, then $\cC_s$ cannot be list decoded efficiently for any radius at least $\tau'+s'$, where $\tau'=\floor{\frac{d'-1}{2}}+1$,~and
	\begin{enumerate}
		\item If $s$ is even, then $s'=\frac{s}{2}$.
		\item If $s$ is odd and $n-k$ is even, then $s'=\frac{s}{2}+\frac{1}{2}$.
		\item If $s$ and $n-k$ are both odd, then $s'=\frac{s}{2}-\frac{1}{2}$.

	\end{enumerate}
\end{lemma}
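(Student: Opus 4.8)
The plan is to transfer the dense list from $\cC$ to $\cC_s$ directly through the puncturing map, exploiting the fact that deleting coordinates can only \emph{decrease} the rank of a difference. First I would record the relevant monotonicity: if $u,v\in\bF_{q^m}^n$ and $\bar u,\bar v$ denote the vectors obtained by deleting the same fixed $s$ coordinates, then viewing $u-v$ as an $m\times n$ matrix over $\bF_q$ exhibits $\bar u-\bar v$ as a column submatrix, so that $\rank(\bar u-\bar v)\le\rank(u-v)$. Since $s<d$ gives $k\le n-s$, the code $\cC_s$ is a genuine $\Gab[n-s,k]$ code, and puncturing sends each $c\in\cC$ to a codeword $\bar c\in\cC_s$; by the monotonicity, every $c$ with $\rank(w-c)\le\tau$ yields $\bar c$ with $\rank(\bar w-\bar c)\le\tau$.

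Next I would argue that no codewords collapse under puncturing. As $\cC$ is linear, if $\bar c_1=\bar c_2$ for distinct $c_1,c_2\in\cC$ then $c_1-c_2$ is a nonzero codeword vanishing on all $n-s$ retained coordinates, hence supported on at most $s$ coordinates and of rank at most $s$; because $s<d$ this contradicts the minimum distance of $\cC$. Thus the puncturing map is injective on $\cC$, so the images $\{\bar c\}$ of the $q^{\Omega(n)}$ codewords in $\cC\cap B_\tau(w)$ are pairwise distinct and all lie in $\cC_s\cap B_\tau(\bar w)$.

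The heart of the argument is then a bookkeeping identity showing that the target radius $\tau'+s'$ equals $\tau$, so that $B_\tau(\bar w)=B_{\tau'+s'}(\bar w)$. Writing $\tau=\floor{\frac{n-k}{2}}+1$ and $\tau'=\floor{\frac{n-s-k}{2}}+1$, one has $\tau-\tau'=\floor{\frac{n-k}{2}}-\floor{\frac{n-s-k}{2}}$, and I would split into the three parity cases exactly as in the statement. When $s$ is even the two arguments differ by the integer $\frac s2$, so $\tau-\tau'=\frac s2=s'$ irrespective of the parity of $n-k$. When $s$ is odd the floor shifts by a half-integer, giving $\tau-\tau'=\frac{s+1}{2}=s'$ if $n-k$ is even and $\tau-\tau'=\frac{s-1}{2}=s'$ if $n-k$ is odd. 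In every case $\tau=\tau'+s'$.

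Combining these steps, $\bar w$ witnesses $|\cC_s\cap B_{\tau'+s'}(\bar w)|\ge q^{\Omega(n)}$, which is exponential in the length $n-s\le n$ of $\cC_s$; and since any ball of radius at least $\tau'+s'$ contains $B_{\tau'+s'}(\bar w)$, the same lower bound persists for every such radius, so $\cC_s$ cannot be list decoded efficiently at any radius $\ge\tau'+s'$. I expect the only delicate point to be the parity case analysis of the third paragraph, which must be matched carefully to the three sub-cases, whereas the rank monotonicity and the injectivity of puncturing are routine consequences of the rank metric together with the hypothesis $s<d$.
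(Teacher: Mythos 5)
Your proposal is correct and follows essentially the same route as the paper's proof: transferring the dense ball through the puncturing map via rank monotonicity, noting that $s<d$ prevents codewords from coinciding, and then doing the parity bookkeeping (which the paper presents as a table) to verify $\tau=\tau'+s'$ in each case. Your explicit justification of the injectivity step and of the floor-function identities simply fills in details the paper leaves implicit.
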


\begin{proof}
Since puncturing may only reduce the distance between any two given words, and since any two codewords in $\cC$ cannot coincide by puncturing $s<d$ coordinates, it follows that 
\[
	\left|\cC_s\cap B_\tau(w')\right|\ge q^{\Omega(n)},
\]
where $w'\in\bF_{q^m}^{n-s}$ is the result of puncturing $w$. Hence, $\cC_s$ cannot be list decoded efficiently beyond the radius $\tau$. Table~\ref{table} presents the values of $\tau$ as a function of $\tau'$ and $s$, from which the claim follows. 
\end{proof}

\begin{table*}
\begin{center}
\begin{tabular}{ |l|c|c||c|}

	\cline{2-4}
	\multicolumn{1}{c}{} ~ & \multicolumn{1}{|c|}{\multirow{2}{*}{$\tau=\floor{\frac{n-k}{2}}+1$ }}& \multirow{2}{*}{$\tau'=\floor{\frac{n-k-s}{2}}+1$} & \multirow{2}{*}{Resulting radius} \\
	\multicolumn{1}{c}{} ~ &\multicolumn{1}{|c|}{~}&~&~\\ \hhline{-===} 
	
	\multirow{2}{*}{$n-k$ and $s$ are both even.} & \multirow{2}{*}{$\frac{n-k}{2}+1$} & \multirow{2}{*}{$\frac{n-k}{2}-\frac{s}{2}+1$} & \multirow{2}{*}{$\tau=\tau'+\frac{s}{2}$}\\
	~&~&~&~\\\hline
	
	\multirow{2}{*}{$n-k$ is odd and $s$ is even.} & \multirow{2}{*}{$\frac{n-k-1}{2}+1$} & \multirow{2}{*}{$\frac{n-k-1}{2}-\frac{s}{2}+1$} & \multirow{2}{*}{$\tau=\tau'+\frac{s}{2}$}\\
	~&~&~&~\\\hline
	
	\multirow{2}{*}{$n-k$ is even and $s$ is odd.} & \multirow{2}{*}{$\frac{n-k}{2}+1$} & \multirow{2}{*}{$\frac{n-k}{2}-\frac{s+1}{2}+1$} & \multirow{2}{*}{$\tau=\tau'+\frac{s}{2}+\frac{1}{2}$}\\
	~&~&~&~\\\hline
	
	\multirow{2}{*}{$n-k$ and $s$ are both odd.} & \multirow{2}{*}{$\frac{n-k-1}{2}+1$} &\multirow{2}{*}{$\frac{n-k-1}{2}-\frac{s-1}{2}+1$} & \multirow{2}{*}{$\tau=\tau'+\frac{s}{2}-\frac{1}{2}$}\\
	~&~&~&~\\\hline
		
\end{tabular}
\end{center}
\caption{The resulting radius in Lemma~\ref{lemma:puncturingBound}. If $\Gab[n,k]$ cannot be list decoded efficiently for the radius $\tau$, then the punctured code $\Gab[n-s,k]$, $s<n-k+1$, cannot be list decoded efficiently for this radius as well. The rightmost column provides $\tau$ as a function of $\tau'$ and $s$, where the unique decoding radius of $\Gab[n-s,k]$ is $\tau'-1$. The given values for $\tau'$ are simple calculations which follow from $n-k-s$ being either even or odd.}\label{table}
\end{table*}

%

Since the addition to the unique decoding radius $\tau'$ of $\Gab[n-s,k]$ in Lemma~\ref{lemma:puncturingBound} is usually nonzero, it is not clear if those punctured codes indeed cannot be list decoded efficiently at \textit{any} radius. However, for the special case where $s=1$ and $n-k$ is odd, we obtain the following corollary.

\begin{corollary} \label{corollary:onePuncturing}
	For integers $0<k<n$ such that $n-k$ is odd, if $\Gab[n,k]$ cannot be list decoded efficiently at all, i.e., there exist a word $w\in\bF_{q^m}^n$ such that
	\[
	\left|\cC\cap B_\tau(w)\right|\ge q^{\Omega(n)}
	\]
	where $\tau\triangleq \floor{\frac{d-1}{2}}+1$, then the punctured code $\Gab[n-1,k]$ cannot be list decoded efficiently at all.
\end{corollary}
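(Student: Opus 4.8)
The plan is to specialize Lemma~\ref{lemma:puncturingBound} to the case $s=1$ and verify that the resulting radius coincides exactly with the fresh unique-decoding-plus-one radius of the punctured code. First I would observe that $\Gab[n-1,k]$ has minimum distance $d'=n-1-k+1=n-k$, and since $n-k$ is assumed odd, $d'$ is odd. Its unique decoding radius is therefore $\floor{\frac{d'-1}{2}}=\floor{\frac{n-k-1}{2}}=\frac{n-k-1}{2}$, so the smallest radius beyond unique decoding is $\tau'=\frac{n-k-1}{2}+1$. The goal is to show that the radius at which Lemma~\ref{lemma:puncturingBound} already forbids efficient list decoding is precisely this $\tau'$, i.e.\ that the additive term $s'$ vanishes.

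The key step is to invoke case~3 of Lemma~\ref{lemma:puncturingBound}: here $s=1$ is odd and $n-k$ is odd, so we are in the regime where $s'=\frac{s}{2}-\frac{1}{2}=\frac{1}{2}-\frac{1}{2}=0$. Thus Lemma~\ref{lemma:puncturingBound} guarantees that $\Gab[n-1,k]$ cannot be list decoded efficiently for any radius at least $\tau'+s'=\tau'+0=\tau'$. Since $\tau'$ is exactly one more than the unique decoding radius of the punctured code, this rules out efficient list decoding at \emph{every} radius beyond unique decoding, which is the meaning of ``cannot be list decoded efficiently at all.'' I would note in passing that the hypothesis $s=1<d$ required by Lemma~\ref{lemma:puncturingBound} holds trivially since $d=n-k+1\ge 2$ for $k<n$.

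The only genuinely delicate point—and the step I would treat most carefully—is confirming that the single puncturing operation does not collapse the exponentially many codewords near $w$. This is already handled inside the proof of Lemma~\ref{lemma:puncturingBound}: puncturing $s=1<d$ coordinates cannot merge two distinct codewords (the minimum distance is strictly larger than the number of punctured positions), so the $q^{\Omega(n)}$ codewords in $B_\tau(w)$ map to $q^{\Omega(n)}$ distinct codewords of $\Gab[n-1,k]$ inside a ball of radius $\tau$ around the punctured word $w'$. Hence the corollary follows immediately by combining the distinctness of the punctured codewords with the exact cancellation $s'=0$ supplied by case~3 of the lemma, and no further computation is needed.
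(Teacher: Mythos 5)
Your proposal is correct and follows exactly the route the paper intends: the corollary is the specialization of Lemma~\ref{lemma:puncturingBound} to $s=1$ under case~3 (both $s$ and $n-k$ odd), where $s'=\frac{1}{2}-\frac{1}{2}=0$, so the forbidden radius $\tau'+s'=\tau'=\floor{\frac{d'-1}{2}}+1$ is precisely one beyond the unique decoding radius of $\Gab[n-1,k]$. Your additional checks (that $s=1<d$ holds since $k<n$, and that puncturing one coordinate cannot merge codewords) match the hypotheses and the proof already given in the lemma, so nothing is missing.
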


Although Corollary~\ref{corollary:onePuncturing} does not provide a drastic improvement in the variety of codes to which our bounds apply, it does imply the important observation that the divisibility constraints between $n$ and $m$ in Theorem~\ref{theorem:GabidulinCounting} and Theorem~\ref{theorem:GabidulinExplicit} \emph{are not necessary}. In addition, one may obtain infinite examples of Corollary~\ref{corollary:onePuncturing} by puncturing either of the codes $\Gab[3g,g+1]$ and $\Gab[5g,g+1]$ from Example~\ref{example:ronny}, and thus obtain $\Gab[3g-1,g+1]$ and $\Gab[5g-1,g+1]$ codes that cannot be list decoded efficiently at all.

\color{black}
\section{Bounds for Constant-Dimension Subspace Codes}\label{section:subspaceCodes}
In this section, we state new bounds on list decoding \emph{lifted Gabidulin codes} (see~\cite{ARankMetricApproach}), which are a class of almost-optimal constant dimension subspace codes. 
Lifted Gabidulin codes are of special interest since, in contrast to many other subspace code constructions, they can be efficiently decoded (see~\cite{ARankMetricApproach}) while only losing a relatively small number of codewords compared to other subspace code constructions.
These bounds are a direct consequence of our bounds for list decoding Gabidulin codes (Theorem~\ref{theorem:GabidulinCounting} and Theorem~\ref{theorem:GabidulinExplicit}). 

Throughout this section, the quadruple $(n,M_s,d_s,r)_q$ denotes a constant dimension subspace code in the Grassmannian $\grsmn{q}{n}{r}$ of cardinality $M_s$ and minimum subspace distance $d_s$. Further, $\Span{A}$ denotes the subspace spanned by the rows of a matrix $A$. The \emph{lifting} is a map which is applied to a single matrix or a set of matrices and is defined as follows.
\begin{definition} \label{def:lifting_matrixcode}
Consider the mapping
\begin{align*}
\mathcal{I}: \quad \bF_q^{n \times m} &\rightarrow \grsmn{q}{n}{n+m}\\
X &\mapsto \Span{[\I_n \ X]},
\end{align*}
where $\I_n$ denotes the $n \times n $ identity matrix.
The subspace $\liftmap{\X} = \Span{[\I_n \ \X]}$ is called {lifting} of the matrix $\X$. If we apply this map on all codewords of a code $\mathcal{C}$ (in matrix representation), then the subspace code $\liftmap{\mathcal{C}}$ is called lifting of the code ${C}$.
\end{definition}

The properties of a lifted code were studied by Silva, Kschischang and K\"otter and are summarized in the following two lemmas.

\begin{lemma}\cite{ARankMetricApproach}\label{lem:rankandsubdistance}
Let $\X, \Y \in \Fq^{n \times m}$ and let $\liftmap{\X}, \liftmap{\Y} \in \grsmn{q}{n+m}{n}$ be as in Definition~\ref{def:lifting_matrixcode}. Then,
\begin{equation*}
d_s\left(\liftmap{\X}, \liftmap{\Y}\right) = 2 \cdot d_R(\X, \Y ).
\end{equation*}
\end{lemma}
\begin{proof} 
\begin{align*}
d_s\left(\liftmap{\X},\liftmap{\Y}\right)  
& = 2 \dim \left(\liftmap{\X}+\liftmap{\Y}\right)\\
&\phantom{=}-\dim\left(\liftmap{\X}\right)-\dim\left(\liftmap{\Y}\right)\\
& = 2  \rank \begin{pmatrix}
\I_n & \X\\
\I_n & \Y\\
\end{pmatrix} - 2n
\\
&= 2  \rank \begin{pmatrix}
\I_n & \X\\
\mathbf{0} & \Y-\X\\
\end{pmatrix} - 2n\\
& = 2 \left[\rank(\I_n)+\rank(\X-\Y)\right]-2n 
\\&= 2 \rank(\X-\Y)= 2 d_R(\X,\Y).
\end{align*} 
\end{proof}
The following lemma directly follows from Lemma~\ref{lem:rankandsubdistance}.
\begin{lemma}\cite{ARankMetricApproach}\label{lem:subspacecode_lifted_rank}
Let $\mathcal{C}$ be a rank-metric code over $\bF_{q^m}$ of length $n\leq m$, minimum rank distance $d_R$ and cardinality~$M_R$, whose codewords are represented as $m \times n$ matrices over $\bF_q$.
Then, the lifting of the transposed codewords, i.e.,
\begin{equation*}
\liftmap{\mathcal{C}^T}\triangleq\Big\{\liftmap{\C^T} = \Span{[\I_n \ \C^T]} ~\Big|~ \C \in \mathcal{C}\Big\}
\end{equation*}
is an $(n+m,M_s=M_R,d_s=2d_R,n)_q$ constant dimension subspace code.
\end{lemma}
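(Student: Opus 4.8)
The final statement to prove is Lemma~\ref{lem:subspacecode_lifted_rank}, which asserts that lifting the transposed codewords of a rank-metric code $\mathcal{C}$ over $\bF_{q^m}$ of length $n \le m$, minimum rank distance $d_R$, and cardinality $M_R$ produces an $(n+m, M_s = M_R, d_s = 2d_R, n)_q$ constant dimension subspace code.

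The plan is to verify each of the four advertised parameters in turn, leaning almost entirely on the preceding Lemma~\ref{lem:rankandsubdistance}. First I would fix notation: each codeword $\C \in \mathcal{C}$ is represented as an $m \times n$ matrix over $\bF_q$, so its transpose $\C^T$ is an $n \times m$ matrix, which is precisely the domain $\bF_q^{n \times m}$ required by the lifting map $\mathcal{I}$ in Definition~\ref{def:lifting_matrixcode}. Thus $\liftmap{\C^T} = \Span{[\I_n \ \C^T]}$ is well defined and is a subspace of $\bF_q^{n+m}$.

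Next I would handle the \emph{ambient dimension} and \emph{constant dimension} claims together. The matrix $[\I_n \ \C^T]$ has $n$ rows and $n+m$ columns, so $\Span{[\I_n \ \C^T]}$ lives in $\grsmn{q}{n+m}{\cdot}$. Because the leftmost $n \times n$ block is the identity $\I_n$, the $n$ rows are automatically linearly independent over $\bF_q$ regardless of $\C$, so $\dim \liftmap{\C^T} = n$ for every codeword. This gives the ambient space $\grsmn{q}{n+m}{n}$ and establishes that the dimension is the constant $n$. For the \emph{minimum distance}, I would take two distinct codewords $\C_1, \C_2 \in \mathcal{C}$ and apply Lemma~\ref{lem:rankandsubdistance} with $\X = \C_1^T$ and $\Y = \C_2^T$, obtaining $d_s(\liftmap{\C_1^T}, \liftmap{\C_2^T}) = 2\, d_R(\C_1^T, \C_2^T)$. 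Since $\rank(\C_1^T - \C_2^T) = \rank((\C_1 - \C_2)^T) = \rank(\C_1 - \C_2) = d_R(\C_1, \C_2)$, taking the minimum over all distinct pairs yields $d_s = 2 d_R$.

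Finally, for the \emph{cardinality} claim $M_s = M_R$, I would argue that the lifting map restricted to $\mathcal{C}^T$ is injective: if $\liftmap{\C_1^T} = \liftmap{\C_2^T}$ then by Lemma~\ref{lem:rankandsubdistance} their subspace distance is zero, forcing $d_R(\C_1, \C_2) = 0$, hence $\C_1 = \C_2$; so distinct codewords map to distinct subspaces and $M_s = M_R$. I do not anticipate a genuine obstacle here, since the lemma is essentially a bookkeeping consequence of the distance identity already proved in Lemma~\ref{lem:rankandsubdistance}; the only point requiring a word of care is the transposition, namely that passing from $m \times n$ matrices to their $n \times m$ transposes both matches the domain of $\mathcal{I}$ and preserves rank (and hence rank distance), which is what makes the ambient Grassmannian come out as $\grsmn{q}{n+m}{n}$ rather than $\grsmn{q}{n+m}{m}$.
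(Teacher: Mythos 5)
Your proposal is correct and follows exactly the route the paper intends: the paper gives no separate proof, stating only that the lemma ``directly follows from Lemma~\ref{lem:rankandsubdistance},'' and your verification of the four parameters (well-definedness of the lifting on transposes, constant dimension $n$ from the identity block, $d_s = 2d_R$ via the distance identity and rank-invariance under transposition, and injectivity hence $M_s = M_R$) is precisely the bookkeeping that assertion leaves implicit.
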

Hence, the lifting of the transpose of a $\Gab[n,k]$ code over $\mathbb{F}_{q^{m}}$ with $n\leq m$, minimum rank distance ${d=n-k+1}$ and cardinality $M_R=q^{mk}$ results in an $(n+m,q^{mk},2d,n)_q$ constant dimension subspace code in the Grassmannian $\grsmn{q}{n+m}{m}$.

So far, the only known bound to list decoding subspace codes was given in~\cite{RosenthalSubspace} and is based on the results for Gabidulin codes from~\cite{BoundsOnListDecodingOfRankMetric}. The following theorem summarizes the result from~\cite{RosenthalSubspace}.
\begin{theorem}\cite[Theorem~37]{RosenthalSubspace}
	Let $\mathcal{C}$ be a linear $\Gab[n,k]$ Gabidulin code over $\bF_{q^{m}}$ of length $n \leq  m$, $d=n-k+1$, evaluation points $\alpha_1,\ldots,\alpha_{n} \in\bF_{q^{m}}$, and let $\tau$ be an integer such that $\lfloor\tau/2 \rfloor < d$. Denote by $\liftmap{\mathcal{C}^T}$ the $(n+m,q^{mk},2d,n)_q$ subspace code from the lifting of the code $\mathcal{C}$ as in Definition~\ref{def:lifting_matrixcode}.
	Then, there is a subspace $\Span{R}$ such that
	\begin{equation*}
		\left|\liftmap{\mathcal{C}^T}\cap B^s_\tau(\Span{R})\right|
		\geq \frac{\qbin{n}{\lfloor\tau/2\rfloor}{q}}{q^{m(n-k-\lfloor\tau/2\rfloor)}}.
	\end{equation*}
\end{theorem}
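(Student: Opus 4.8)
The plan is to reduce this statement about subspace codes directly to the rank-metric bound already established in Theorem~\ref{theorem:countingAntonia}, exploiting the fact that lifting converts rank distance into subspace distance exactly (Lemma~\ref{lem:rankandsubdistance}). First I would apply Theorem~\ref{theorem:countingAntonia} to the code $\Gab[n,k]$ with the rank-metric radius $\floor{\tau/2}$ in place of $\tau$. The hypothesis $\floor{\tau/2}<d$ is precisely what is needed for that theorem to apply, so it yields a word $w\in\bF_{q^m}^n$, viewed as an $m\times n$ matrix over $\Fq$, satisfying
\begin{equation*}
\left|\Gab[n,k]\cap B_{\floor{\tau/2}}(w)\right|\ge \frac{\qbin{n}{n-\floor{\tau/2}}{q}}{\left(q^m\right)^{n-\floor{\tau/2}-k}}.
\end{equation*}
Using the symmetry $\qbin{n}{n-\floor{\tau/2}}{q}=\qbin{n}{\floor{\tau/2}}{q}$ of the Gaussian coefficient, the right-hand side is exactly the quantity appearing in the claimed bound.

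Next I would take the received subspace to be the lifting of the transposed received word, i.e.\ set $\Span{R}\triangleq\liftmap{w^T}$, where $w^T\in\Fq^{n\times m}$. For every codeword $\C\in\Gab[n,k]\cap B_{\floor{\tau/2}}(w)$, its transpose $\C^T$ lies in $\Fq^{n\times m}$ and its lifting $\liftmap{\C^T}$ is a codeword of the subspace code $\liftmap{\cC^T}$ described in Lemma~\ref{lem:subspacecode_lifted_rank}. Applying Lemma~\ref{lem:rankandsubdistance} to the pair $\C^T,w^T$, and using that transposition preserves rank, gives
\begin{equation*}
d_s\left(\liftmap{\C^T},\Span{R}\right)=2\,d_R(\C^T,w^T)=2\,\rank(\C-w)\le 2\floor{\tau/2}\le\tau,
\end{equation*}
so each such lifted codeword lands in the subspace ball $B^s_\tau(\Span{R})$.

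Finally I would count. Since lifting is injective and cardinality-preserving (Definition~\ref{def:lifting_matrixcode} and Lemma~\ref{lem:subspacecode_lifted_rank}), the map $\C\mapsto\liftmap{\C^T}$ sends the codewords of $\Gab[n,k]$ inside $B_{\floor{\tau/2}}(w)$ injectively into $\liftmap{\cC^T}\cap B^s_\tau(\Span{R})$, whence the two cardinalities satisfy the desired inequality. The only points that require care, rather than any genuine difficulty, are the rounding step $2\floor{\tau/2}\le\tau$ (equality when $\tau$ is even, and a strict decrease by one when $\tau$ is odd, both of which keep the image inside the radius-$\tau$ ball) and the bookkeeping of transposes together with the vector/matrix identification, under which the rank distance of vectors over $\bF_{q^m}$ agrees with the rank distance of their matrix representations over $\Fq$. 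I do not expect a substantive obstacle here: the entire content is the clean factor-of-two relation between rank distance and subspace distance supplied by Lemma~\ref{lem:rankandsubdistance}, layered on top of the already-proven counting bound of Theorem~\ref{theorem:countingAntonia}.
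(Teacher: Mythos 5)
Your proof is correct and takes essentially the same route as the paper: the paper only cites this statement from~\cite{RosenthalSubspace}, but its own generalization (Theorem~\ref{theorem:LiftedGabidulinCounting_gen}) is proved exactly as you propose, by combining a rank-metric counting bound with the lifting relation~\eqref{eq:relation_Balls_rank_subspace} obtained from Lemma~\ref{lem:rankandsubdistance}, with the floor $\lfloor\tau/2\rfloor$ accounting for the evenness of the subspace distance. Your handling of the rounding $2\lfloor\tau/2\rfloor\le\tau$, the Gaussian-coefficient symmetry, and the injectivity of $\C\mapsto\liftmap{\C^T}$ is sound, so there is no gap.
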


Let $B^s_{\tau}(\Span{W}) \triangleq \{\Span{V}~\big|\; d_s(\Span{W}, \Span{V})\leq\tau\}$ denote a ball of radius $\tau$ centered at $\Span{W}$ in the subspace distance.
With Lemma~\ref{lem:rankandsubdistance}, we obtain the following relation between a rank-metric code $\mathcal{C}$ and its lifted subspace code $\liftmap{\mathcal{C}^T}$:
\begin{equation}\label{eq:relation_Balls_rank_subspace}
\left|\;\mathcal{C} \cap B_\tau(c_R)\right| \leq \left|\liftmap{\mathcal{C}^T}\cap B^s_{2\tau}(\liftmap{c_R^T})\right|.
\end{equation}
This relation and Theorem~\ref{theorem:GabidulinCounting} provide the following theorem on the list size of lifted Gabidulin codes.


\begin{theorem}\label{theorem:LiftedGabidulinCounting_gen}
Let $\mathcal{C}$ be a linear $\Gab[n,k]$ Gabidulin code over~$\bF_{q^{m}}$ with length $n \mid m$, $d=n-k+1$, and evaluation points $\alpha_1,\ldots,\alpha_n \in\beta\bF_{q^n}$ for some~$\beta\in\bF_{q^m}^*$. Let $\tau,g$ be integers such that $\floor{\frac{d-1}{2}}+1\le\lfloor\frac{\tau}{2}\rfloor\le d-1$, $g\ge 2$, and $g|\gcd(n-\lfloor\frac{\tau}{2}\rfloor,n)$. Let~$\ell$ be the unique integer such that $n=n-\lfloor\frac{\tau}{2}\rfloor+g(\ell+1)$ (and thus, $\lfloor\frac{\tau}{2}\rfloor=g(\ell+1)$) and denote by $\liftmap{\mathcal{C}^T}$ the $(n+m,q^{mk},2d,n)_q$ subspace code from the lifting of the code $\mathcal{C}$ as in Definition~\ref{def:lifting_matrixcode}.

Then there exists a subspace $\liftmap{c_R^T}\in \grsmn{q}{n+m}{n}$, where $c_R\in\bF_{q^{m}}^n\setminus \Gab[n,k]$ such that 
\begin{align*}
\left|\liftmap{\mathcal{C}^T}\cap B^s_\tau(\liftmap{c_R^T})\right|
&\geq \frac{\qbin{n/g}{(n-\lfloor\tau/2\rfloor)/g}{q^g}}{q^{n\ell}}\nonumber\\
&\geq q^{n-\lfloor\tau/2\rfloor(\ell+1)}.\label{equation:LiftedcountingBound_gen}
\end{align*}
\end{theorem}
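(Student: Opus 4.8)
The plan is to derive this subspace-metric bound directly from its rank-metric predecessor, Theorem~\ref{theorem:GabidulinCounting}, by pushing the rank-metric dense set through the lifting map. The only bookkeeping subtlety is that, by Lemma~\ref{lem:rankandsubdistance}, a subspace radius $\tau$ corresponds to a rank radius $\lfloor\tau/2\rfloor$; accordingly I would run the entire rank-metric argument at radius $\tau'\triangleq\lfloor\tau/2\rfloor$ and then re-expand to $\tau$ at the end.

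First I would check that the hypotheses are exactly arranged so that $\tau'=\lfloor\tau/2\rfloor$ is an admissible radius for Theorem~\ref{theorem:GabidulinCounting}. Indeed, the assumptions $\floor{\frac{d-1}{2}}+1\le\lfloor\frac{\tau}{2}\rfloor\le d-1$, $g\ge 2$, $g\mid\gcd(n-\lfloor\frac{\tau}{2}\rfloor,n)$, and the definition of $\ell$ via $n=n-\lfloor\frac{\tau}{2}\rfloor+g(\ell+1)$ become, after substituting $\tau'$, verbatim the conditions of Theorem~\ref{theorem:GabidulinCounting}. Invoking that theorem then supplies a word $c_R\in\bF_{q^m}^n\setminus\Gab[n,k]$ with
\[
\left|\Gab[n,k]\cap B_{\tau'}(c_R)\right|\ge\frac{\qbin{n/g}{(n-\tau')/g}{q^g}}{q^{n\ell}}.
\]

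Next I would transport this count into the Grassmannian. Specializing relation~\eqref{eq:relation_Balls_rank_subspace} to the radius $\tau'$ gives $\left|\mathcal{C}\cap B_{\tau'}(c_R)\right|\le\left|\liftmap{\mathcal{C}^T}\cap B^s_{2\tau'}(\liftmap{c_R^T})\right|$, and because $2\tau'=2\lfloor\tau/2\rfloor\le\tau$ we have $B^s_{2\tau'}(\liftmap{c_R^T})\subseteq B^s_{\tau}(\liftmap{c_R^T})$, so the right-hand side is in turn bounded above by $\left|\liftmap{\mathcal{C}^T}\cap B^s_{\tau}(\liftmap{c_R^T})\right|$. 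Concatenating these inequalities with the displayed rank-metric bound yields the first claimed estimate, taking $\liftmap{c_R^T}\in\grsmn{q}{n+m}{n}$ to be the lifting of the transposed word $c_R$.

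For the second inequality I would simply reuse the elementary estimate carried out right after Theorem~\ref{theorem:GabidulinCounting}: lower-bounding the Gaussian coefficient by $(q^g)^{\frac{n-\tau'}{g}\cdot\frac{\tau'}{g}}$ and substituting $\tau'=g(\ell+1)$ collapses the exponent to $n-g(\ell+1)^2=n-\lfloor\tau/2\rfloor(\ell+1)$. I do not anticipate a genuine obstacle here, since the argument is essentially a change of metric riding on Lemma~\ref{lem:rankandsubdistance}. The only point demanding care is the floor: when $\tau$ is odd the construction in fact populates the strictly smaller ball $B^s_{\tau-1}$, but this is harmless because $B^s_{\tau-1}(\liftmap{c_R^T})\subseteq B^s_{\tau}(\liftmap{c_R^T})$.
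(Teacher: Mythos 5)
Your proposal is correct and follows exactly the paper's own route: the paper proves this theorem by combining relation~\eqref{eq:relation_Balls_rank_subspace} with Theorem~\ref{theorem:GabidulinCounting}, noting only that the floor $\lfloor\tau/2\rfloor$ arises because subspace distances between lifted codewords are even. Your write-up simply makes explicit the hypothesis matching at radius $\tau'=\lfloor\tau/2\rfloor$, the ball inclusion $B^s_{2\tau'}\subseteq B^s_{\tau}$, and the Gaussian-coefficient estimate, all of which the paper leaves implicit.
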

\begin{proof}
The statement follows from~\eqref{eq:relation_Balls_rank_subspace} and Theorem~\ref{theorem:GabidulinCounting}. 
The floor operation for $\lfloor\tau/2\rfloor$ is necessary since the subspace distance is an even number, see explanation of the proof of \cite[Theorem~37]{RosenthalSubspace}.
\end{proof}
Thus, this bound results in a list of exponential size for even~$\tau$ when $\tau < \frac{2cn}{\ell+1}$ and for odd $\tau$ when $\tau < \frac{2cn}{\ell+1}+1$ for~${c\in(0,1)}$, which results for many cases in a better bound than the one from \cite[Theorem~37]{RosenthalSubspace}. Similarly, from Theorem~\ref{theorem:GabidulinExplicit}, we obtain the following theorem.
\begin{theorem}\label{theorem:GabidulinExplicit_Subspace}
Let $g,s,n$, and $m$ be integers such that $g\ge 2$, $gs|n$, and $n|m$.  Let $\mathcal{C}$ be a linear $\Gab[n,n-2gs+1]$ Gabidulin code over $\bF_{q^m}$ , with $d=2gs$ and evaluation points $\alpha_1,\ldots,\alpha_n \in\beta\bF_{q^n}$ for some~$\beta\in\bF_{q^m}^*$.
Denote by $\liftmap{\mathcal{C}^T}$ the ${(n+m,q^{m(n-2gs+1)},2d,n)_q}$ subspace code from the lifting of the code $\mathcal{C}$ as in Definition~\ref{def:lifting_matrixcode}.

If $\lfloor\frac{\tau}{2}\rfloor\triangleq \floor{\frac{d-1}{2}}+1=gs$, then there exists an (explicitly defined) subspace $\liftmap{c_R^T}\in \grsmn{q}{n+m}{n}$, where \[c_R\in\bF_{q^m}^n\setminus\Gab[n,n-2gs+1],\]such that
\[
\left|\liftmap{\mathcal{C}^T}\cap B^s_\tau(\liftmap{c_R^T})\right|\ge\frac{q^n-1}{q^{gs}-1} = \frac{q^n-1}{q^{\lfloor{\tau}/{2}\rfloor}-1}.
\]
\end{theorem}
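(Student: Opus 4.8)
The plan is to derive Theorem~\ref{theorem:GabidulinExplicit_Subspace} as a direct corollary of Theorem~\ref{theorem:GabidulinExplicit} by transporting the dense set of rank-metric codewords through the lifting map, exactly as Theorem~\ref{theorem:LiftedGabidulinCounting_gen} was derived from Theorem~\ref{theorem:GabidulinCounting}. First I would invoke Theorem~\ref{theorem:GabidulinExplicit} with the present parameters $g,s,n,m$ (which satisfy $g\ge 2$, $gs\mid n$, and $n\mid m$): it yields an explicit word $c_R\in\bF_{q^m}^n\setminus\Gab[n,n-2gs+1]$ together with the set of codewords $\{c_{R-P_\beta}\}_{P_\beta\in\cZ_\beta}$, all of rank distance at most $gs$ from $c_R$, whose cardinality is $\frac{q^n-1}{q^{gs}-1}$.

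The second step is to push this configuration into the Grassmannian $\grsmn{q}{n+m}{n}$ via the lifting of transposes. I would apply the relation~\eqref{eq:relation_Balls_rank_subspace}, namely $\left|\mathcal{C}\cap B_\tau(c_R)\right|\le\left|\liftmap{\mathcal{C}^T}\cap B^s_{2\tau}(\liftmap{c_R^T})\right|$, with the rank radius $gs$ playing the role of $\tau$ in that inequality. Since each $c_{R-P_\beta}$ lies in $B_{gs}(c_R)$ in the rank metric, Lemma~\ref{lem:rankandsubdistance} guarantees that its lifted transpose satisfies $d_s(\liftmap{c_R^T},\liftmap{c_{R-P_\beta}^T})=2\,d_R(c_R,c_{R-P_\beta})\le 2gs$, so all $\frac{q^n-1}{q^{gs}-1}$ lifted codewords land in the subspace ball $B^s_{2gs}(\liftmap{c_R^T})$. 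Identifying $\lfloor\tau/2\rfloor=gs$, the subspace radius is $2gs$, and the claimed inequality follows with the stated right-hand side $\frac{q^n-1}{q^{gs}-1}=\frac{q^n-1}{q^{\lfloor\tau/2\rfloor-1}}$, rewritten using $\lfloor\tau/2\rfloor=gs$.

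The only subtlety, and the step I expect to require the most care, is the handling of the floor and the parity of $\tau$. Because $d_s$ is always an even integer (by Lemma~\ref{lem:rankandsubdistance}, it equals $2d_R$), a subspace ball of radius $\tau$ and one of radius $2\lfloor\tau/2\rfloor$ contain exactly the same lifted codewords; this is precisely the remark cited in the proof of Theorem~\ref{theorem:LiftedGabidulinCounting_gen} and attributed to the proof of~\cite[Theorem~37]{RosenthalSubspace}. Thus $B^s_\tau(\liftmap{c_R^T})$ captures every lifted image whose rank distance from $c_R$ is at most $\lfloor\tau/2\rfloor=gs$, which is exactly the set furnished by Theorem~\ref{theorem:GabidulinExplicit}. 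I would state this identification explicitly to justify replacing the $2gs$ from the naive application of~\eqref{eq:relation_Balls_rank_subspace} by the ambient $\tau$, after which the size bound and the explicit description of the codewords are inherited verbatim from Theorem~\ref{theorem:GabidulinExplicit}.
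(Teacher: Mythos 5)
Your proposal is correct and follows essentially the same route as the paper, which likewise derives this theorem by combining Theorem~\ref{theorem:GabidulinExplicit} with the ball relation~\eqref{eq:relation_Balls_rank_subspace} and the remark that subspace distances between lifted codewords are even (so the floor $\lfloor\tau/2\rfloor$ loses nothing); the paper's own justification is exactly this one-line observation, which you have merely spelled out. The only blemish is a typo near the end of your second paragraph: the denominator should read $q^{\lfloor\tau/2\rfloor}-1$, not $q^{\lfloor\tau/2\rfloor-1}$.
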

The explicitly defined subspace follows directly from lifting the matrix representation of the explicit word of Theorem~\ref{theorem:GabidulinExplicit}. In~\cite{BoundsOnListDecodingOfRankMetric}, a non-linear rank-metric code was presented which cannot be list decoded efficiently at all. The lifting of this code obviously results in a subspace code with the same restrictions to list decoding as lifted Gabidulin codes.
However, lifted Gabidulin codes are of special interest for network coding and therefore, we have analyzed their list decoding capability in this section.

\section{Conclusions and Future Work}\label{section:conclusions}
We have improved the worst-case bound on the list decodability of Gabidulin codes in many cases. 
This was shown by using the structure of the subspace polynomials of a subset of $\grsmn{q}{n}{r}$ for $n$ and $r$ that have a non-trivial $\gcd$. In addition, we have presented such subspace polynomials explicitly, using the notions of cyclic shifts and $q$-associativity. Both of these results outperform the counting argument applied in \cite{BoundsOnListDecodingOfRankMetric}, and provide examples of infinite families of Gabidulin codes that cannot be list decoded efficiently beyond the unique decoding radius. This resolves an open question by~\cite{Ding,GuruswamiExplicit}, and~\cite{BoundsOnListDecodingOfRankMetric} and reveals a significant difference between decoding Gabidulin and Reed--Solomon codes despite their similar code definitions.

The work of \cite{BoundsOnListDecodingOfRankMetric} ruled out the existence of an efficient algorithm for list decoding of Gabidulin codes beyond the Johnson radius. Our work rules out the existence of an efficient list decoding algorithm that applies for any Gabidulin code and any radius beyond half the minimum distance. However, this certainly does not rule out the existence of an efficient algorithm for list decoding of very large subcodes of Gabidulin codes or Gabidulin codes with lower rates, since our work requires the code parameters to satisfy some strict number-theoretic and field-theoretic constraints, and our examples have rate at least~$\frac{1}{5}$. 
For example,~\cite{GuruswamiExplicit} provides a subcode of a Gabidulin code which can be list decoded efficiently. 

We have also shown that identical results hold for lifted Gabidulin codes, which are an important class of nearly optimal subspace codes.
Additional discussion about the inapplicability of our techniques to improve the known combinatorial bound on list decoding of Reed--Solomon codes appears in~\nameref{appendix:A}. 

For future research, we would like to have similar bounds on Gabidulin codes in $\bF_{q^m}^n$ where the evaluation points do not necessarily come from a cyclic shift of~$\bF_{q^n}$. This seems to require a rigorous understanding of the connection between the subspace polynomials of a given subspace~$V$ and the subspace $A\cdot V$, where~$A$ is a nonsingular transform. Moreover, we would like to generalize our results for \emph{any} case where $n$ does not necessarily divide $m$, a problem which seems to require generalizing Lemma~\ref{lemma:formalLemma} to the case $n\nmid m$. 
In addition, we would like to derive bounds for Gabidulin codes with rates less than~$\frac{1}{5}$.

\section*{Appendix~A}\label{appendix:A}
In \cite{SubspacePolynomialsAndLimits}, limits for list decoding of Reed--Solomon codes were shown using techniques which highly resemble the ones in \cite{BoundsOnListDecodingOfRankMetric} and in this paper. The interested reader might conjecture that the improvement achieved here (see Theorem~\ref{theorem:GabidulinCounting} and Theorem~\ref{theorem:GabidulinExplicit}) for Gabidulin codes may also be attained for Reed--Solomon codes, for which list decoding related problems were very extensively studied. In what follows we briefly describe why such an improvement cannot be directly attained by our techniques. Adapting these techniques to Reed--Solomon codes remains an intriguing open problem. In the sequel, we briefly describe the methods and results of \cite{SubspacePolynomialsAndLimits}.

Following the notations in \cite{SubspacePolynomialsAndLimits}, a Reed--Solomon code $\RS[q^n,q^u]$ of length $q^n$ and dimension $q^u$ is a subset of $\bF_{q^n}^{q^n}$ such that
\[
\RS[q^n,q^u] \triangleq \left\{ \left(p(\alpha_1),\ldots,p(\alpha_{q^n})\right)~\Bigg|~\begin{matrix}p:\bF_{q^n}\to\bF_{q^n}\mbox{ is a }\\\mbox{polynomial with }\\\deg(p)< q^u\end{matrix} \right\},
\]
where $\{\alpha_i\}_{i=1}^{q^n}$ are \textit{all} elements of $\bF_{q^n}$. Notice that Reed--Solomon codes may be defined as the evaluation of polynomials in any number of elements in the field. However, we consider this definition for convenience. Notice also that any word $w\in \bF_{q^n}^{q^n}$ may be regarded as a polynomial over $\bF_{q^n}$, and any word $c\in \RS[{q^n},{q^u}]$ may be regarded as a polynomial over $\bF_{q^n}$ of \textit{bounded degree}. 

\begin{definition}\cite[Definition 3.3]{SubspacePolynomialsAndLimits}\label{definition:asFamily}
	A family of polynomials $\cP\subseteq \bF_{q^n}[x]$ is said to be an $(a,s)$-family if
	\begin{enumerate}
		\item Each polynomial in $\cP$ has at least $a$ roots in $\bF_{q^n}$.
		\item There is a polynomial $P^*$ such that for all $P\in\cP$, $P^*-P$ has degree at most $s$. We refer to $P^*$ as a \textit{pivot} of the family.
	\end{enumerate}
\end{definition}

\begin{lemma}\cite[Proposition 3.5]{SubspacePolynomialsAndLimits}\label{lemma:FamilyAndBound}
	Let $a,s$ and $\ell$ be positive integers. Then, the following are equivalent.
	\begin{enumerate}
		\item There is a word $w:\bF_{q^n}\to\bF_{q^n}$ and $\ell$ polynomials $P_1,\ldots,P_\ell$ of degree at most $s$ such that for $i=1,2,\ldots,\ell$, $P_i$ and $w$ agree on at least $a$ points of $\bF_{q^n}$.
		\item There is an $(a,s)$-family of size $\ell$ of polynomials, whose pivot is the unique polynomial $P_w$ that agrees with the word $w$ on all elements in $\bF_{q^n}$.
	\end{enumerate}
\end{lemma}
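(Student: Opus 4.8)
The plan is to set up an affine bijection between the ``agreement polynomials'' of statement~1 and the family polynomials of statement~2, mediated by the unique interpolating polynomial $P_w$ of the word $w$. Recall that since $w:\bF_{q^n}\to\bF_{q^n}$ assigns a value to each of the $q^n$ field elements, Lagrange interpolation produces a unique polynomial $P_w$ of degree at most $q^n-1$ with $P_w(x)=w(x)$ for every $x\in\bF_{q^n}$; this $P_w$ is exactly the pivot named in statement~2. The key elementary fact I will use repeatedly is that, since $w$ and $P_w$ coincide everywhere on $\bF_{q^n}$, a polynomial $P$ agrees with $w$ at a point $x$ if and only if $(P_w-P)(x)=0$.

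First I would prove $(1)\Rightarrow(2)$. Given the $\ell$ distinct polynomials $P_1,\ldots,P_\ell$ of degree at most $s$, set $Q_i\triangleq P_w-P_i$ and let $\cP=\{Q_1,\ldots,Q_\ell\}$. Each $P_i$ agrees with $w$ on at least $a$ points, so by the fact above each $Q_i$ vanishes on those same points and hence has at least $a$ roots in $\bF_{q^n}$, giving condition~1 of Definition~\ref{definition:asFamily}. For condition~2, observe that $P_w-Q_i=P_i$ has degree at most $s$, so $P_w$ serves as a pivot. Since $P\mapsto P_w-P$ is injective, the $Q_i$ are distinct and $\cP$ has size $\ell$.

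The reverse implication $(2)\Rightarrow(1)$ runs the same correspondence backwards. Given an $(a,s)$-family $\cP=\{Q_1,\ldots,Q_\ell\}$ with pivot $P_w$, set $P_i\triangleq P_w-Q_i$. Condition~2 of the family gives $\deg(P_w-Q_i)=\deg P_i\le s$, and for each $x\in\bF_{q^n}$ we have $w(x)-P_i(x)=P_w(x)-P_i(x)=Q_i(x)$, so $P_i$ agrees with $w$ precisely at the roots of $Q_i$; since $Q_i$ has at least $a$ roots, $P_i$ agrees with $w$ on at least $a$ points. The same map being bijective, the $\ell$ polynomials $P_i$ are distinct, as required.

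The only point demanding care---rather than a genuine obstacle---is the bookkeeping around the pivot and the cardinality: one must verify that the pivot produced in the forward direction is exactly the interpolating polynomial $P_w$ (not merely \emph{some} pivot), and that the affine substitution $P\mapsto P_w-P$ preserves the count $\ell$ by being a bijection. Everything else reduces to the single observation that the roots of $P_w-P$ are exactly the points where $P$ agrees with $w$.
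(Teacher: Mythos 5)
Your proof is correct, and it is essentially the canonical argument: the paper itself states Lemma~\ref{lemma:FamilyAndBound} as a citation of \cite[Proposition~3.5]{SubspacePolynomialsAndLimits} without reproducing a proof, and the affine correspondence $P_i \mapsto Q_i = P_w - P_i$ (with the observation that roots of $P_w - P_i$ are exactly the agreement points of $P_i$ with $w$) is precisely how that proposition is established in the cited reference. The only cosmetic remark is the degenerate case $P_i = P_w$, where $Q_i$ is the zero polynomial and one must adopt the convention that it has all $q^n \ge a$ elements as roots; this does not affect correctness.
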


The polynomial $P_w$ corresponds to the ``problematic'' word, that is, the word that has exponentially many codewords in a small radius around it. The polynomials $P_1,\ldots,P_\ell$, having a low degree, are the codewords surrounding $P_w$. It is readily verified that the polynomials $P_1,\ldots,P_\ell$ are inside a ball of small radius centered at $P_w$ if and only if the polynomials $\{P_w-P_i\}_{i=1}^s$ have multiple roots in $\bF_{q^n}$. As subspace polynomials have many roots over $\bF_{q^n}$, they are good candidates for playing the role of the polynomials $\{P_w-P_i\}_{i=1}^s$. This intuition is formalized as follows.

\begin{lemma}\cite{SubspacePolynomialsAndLimits}\label{lemma:TopCoefficients}
	If $S\subseteq \grsmn{q}{n}{r}$ is a set of subspaces whose corresponding subspace polynomials have identical $r-t$ top coefficients for some integer $t<r$, then the set of subspace polynomials of $S$ forms a $(q^r,q^t)$-family.
\end{lemma}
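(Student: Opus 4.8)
The plan is to verify the two defining conditions of an $(a,s)$-family from Definition~\ref{definition:asFamily} with the parameters $a=q^r$ and $s=q^t$, where the family in question is precisely the set $\{P_V\}_{V\in S}$ of subspace polynomials of the subspaces in $S$. First I would handle the root condition. By condition~A3 of Definition~\ref{definition:subspacePoly}, each subspace polynomial $P_V$ factors as $\prod_{v\in V}(x-v)$ where $V\in\grsmn{q}{n}{r}$, so $P_V$ has exactly $|V|=q^r$ roots in $\bF_{q^n}$, namely the elements of $V$ itself. This immediately gives that every polynomial in the set has at least $a=q^r$ roots, establishing the first requirement with no effort beyond citing the definition.

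The second requirement is to exhibit a single pivot polynomial $P^*$ such that $P^*-P_V$ has degree at most $s=q^t$ for every $V\in S$. The natural candidate is a monic polynomial $P^*$ of $q$-degree $r$ whose top $r-t$ coefficients coincide with the common top coefficients shared by all the $P_V$. The hypothesis supplies exactly this common prefix: all subspace polynomials in $S$ agree on their top $r-t$ coefficients. Writing each $P_V$ as $x^{[r]}+\sum_{j=0}^{r-1}a_j^{(V)}x^{[j]}$, the agreement means the coefficients $a_{r-1}^{(V)},\ldots,a_t^{(V)}$ (the $r-t$ coefficients just below the leading term) are independent of $V$. Subtracting $P^*$ from any $P_V$ then cancels the monomials of $q$-degree $t,t+1,\ldots,r$, so the difference $P^*-P_V$ is supported on monomials $x^{[j]}$ with $j<t$. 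The highest surviving monomial is thus $x^{[t-1]}=x^{q^{t-1}}$, which has ordinary degree $q^{t-1}<q^t=s$, so the degree bound holds.

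The one point requiring care is bookkeeping on exactly which indices count as ``top coefficients.'' I would state explicitly that agreeing on the top $r-t$ coefficients means agreeing on the coefficients of $x^{[r]},x^{[r-1]},\ldots,x^{[t]}$, and then confirm that after cancellation the remaining terms involve only $x^{[j]}$ for $0\le j\le t-1$, whose degrees are at most $q^{t-1}$, comfortably below $q^t$. This off-by-one reasoning about where the cancellation stops is the only genuine obstacle, and it is minor; the rest is a direct translation of the subspace-polynomial structure into the language of $(a,s)$-families. I would close by noting that $P^*$ itself need not be a subspace polynomial, which is fine since Definition~\ref{definition:asFamily} places no such requirement on the pivot, and concluding that $\{P_V\}_{V\in S}$ is indeed a $(q^r,q^t)$-family.
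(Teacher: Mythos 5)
Your proof is correct and follows essentially the same route as the paper's: both verify the two conditions of Definition~\ref{definition:asFamily} by noting that each subspace polynomial has exactly $q^r$ roots (its subspace), and by taking as pivot a polynomial carrying the mutual top coefficients, so that every difference is supported on low-order monomials of degree at most $q^t$. Your only deviation is an off-by-one in the indexing convention (you read ``top $r-t$ coefficients'' as those of $x^{[r-1]},\ldots,x^{[t]}$ in addition to the monic leading term, yielding the slightly stronger bound $q^{t-1}$, whereas the paper's convention stops at $x^{[t+1]}$ and gives exactly $q^t$), which is harmless since either reading keeps the difference degree within $s=q^t$.
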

\begin{proof}
	Let $W$ be the set of subspace polynomials of the subspace in the set $S$. Since every polynomial in $W$ is a subspace polynomial, it has exactly $q^r$ roots in $\bF_{q^n}$.  If $P_w$ is the linearized polynomial consisting of the $r-t$ mutual top coefficients of the polynomials in $W$, then $\deg(P_w-P_i)\le q^t$ for all $P_i\in W$.
\end{proof}

In light of Lemma \ref{lemma:FamilyAndBound} and Lemma \ref{lemma:TopCoefficients}, presenting a large family of subspace polynomials that agree on many top coefficients suffices for providing a word that is adjacent to too many Reed--Solomon codewords. Such a family of size $\qbin{n/g}{r/g}{q^g}/q^{n\ell}$ was presented in Theorem~\ref{theorem:MainClaimCounting}, where $g|\gcd(n,r)$ and $\ell=\frac{n-r}{g}-1$. Using the standard bound on the Gaussian coefficient (see Section~\ref{section:introduction}) we have that 
\[
\frac{\qbin{n/g}{r/g}{q^g}}{q^{n\ell}}\le 4q^{\frac{r}{g}(n-r)-n\ell}.
\]
Plugging in the expression for $\ell$ results in an upper bound of~$4q^n$, and hence the size of the family is not more than 4 times the length of the code, which is~$q^n$. In addition, an explicit family can be derived from Construction~\ref{construction:explicitPolys} whose size is not super-polynomial in $n$ either, and hence a super-polynomial list decoding bound is \textit{not} achieved.

Both of these families do provide dense sets that are larger than the ones achieved by a counting argument. Dense sets of Reed--Solomon codewords have applications in hardness of approximating the minimum distance of a linear code \cite{dumer2003hardness} and in constructing error-correcting codes with improved parameters \cite{xing2003nonlinear}. However, the dense sets provided by our results are not nearly large enough for these applications.

\section*{Appendix~B}\label{appendix:B}
The following lemmas shows that the bound from Theorem~\ref{theorem:GabidulinCounting} strictly outperforms the bound implied by Theorem~\ref{theorem:countingAntonia} and Corollary~\ref{corollary:LDradiusAntonia}, given in~\cite{BoundsOnListDecodingOfRankMetric},
when applied over the code in Example~\ref{example:noLD}.

\begin{lemma}\label{lemma:appendixTechnical}
	For any $i\ge 1$,
	\[
	1-\sqrt{\frac{2^i-1}{2^i}}> \frac{1}{2^{i+1}}.
	\]
\end{lemma}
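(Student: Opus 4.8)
The plan is to reduce the claim to a trivially true statement by isolating the square root and squaring, a step that is legitimate here because both sides will be manifestly nonnegative. First I would set $x \triangleq 2^i$, noting $x \ge 2$ since $i \ge 1$, and observe that $\frac{1}{2^{i+1}} = \frac{1}{2x}$. The inequality to be proved then reads $1 - \sqrt{1 - \frac{1}{x}} > \frac{1}{2x}$.

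Next I would rearrange this into the equivalent form
\[
1 - \frac{1}{2x} > \sqrt{1 - \frac{1}{x}}.
\]
The right-hand side is a nonnegative real number, and the left-hand side is positive: since $x \ge 2$ we have $1 - \frac{1}{2x} \ge 1 - \frac{1}{4} = \frac{3}{4} > 0$. Both sides being nonnegative, the displayed inequality is equivalent to the one obtained by squaring, because for nonnegative $A$ and $B$ one has $A > B$ if and only if $A^2 > B^2$.

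Squaring, the left-hand side becomes $1 - \frac{1}{x} + \frac{1}{4x^2}$ and the right-hand side becomes $1 - \frac{1}{x}$, so the inequality collapses to $\frac{1}{4x^2} > 0$. This holds for every nonzero $x$, and in particular for $x = 2^i \ge 2$. Reverting the substitution completes the argument.

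I expect no real obstacle here: the single point requiring care is the justification for squaring, namely verifying that $1 - \frac{1}{2^{i+1}}$ is positive so that the squaring step preserves the direction of the inequality. Beyond that, the computation is entirely routine and no sharper estimate on $\sqrt{1 - 1/x}$ is needed; the strictness of the bound comes for free from the leftover $\frac{1}{4x^2}$ term.
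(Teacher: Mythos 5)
Your proof is correct and is essentially the paper's own argument run in reverse: the paper starts from the trivially true statement $\frac{1}{2^{i+2}}>0$ and derives the claim through the identity $\left(1-\frac{1}{2^{i+1}}\right)^2 = \frac{2^i-1}{2^i}+\frac{1}{2^{2i+2}}$, while you reduce the claim by equivalences (isolating the root and squaring, with the same positivity check justifying the squaring) to the same leftover term $\frac{1}{4x^2}=\frac{1}{2^{2i+2}}>0$. The substitution $x=2^i$ is purely cosmetic, so the two proofs coincide in substance.
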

\begin{proof}
	Clearly, $\frac{1}{2^{i+2}}> 0$, and hence,
	\begin{eqnarray*}
		2^i-1+\frac{1}{2^{i+2}}&>& 2^i-1\\
		1-\frac{2}{2^{i+1}}+\frac{1}{2^{2i+2}}&>& \frac{2^i-1}{2^i}\\
		\left( 1-\frac{1}{2^{i+1}}\right)^2&>& \frac{2^i-1}{2^i}\\
		1-\frac{1}{2^{i+1}}&>&\sqrt{ \frac{2^i-1}{2^i}}\\
		1-\sqrt{\frac{2^i-1}{2^i}}&>& \frac{1}{2^{i+1}}
	\end{eqnarray*}
\end{proof}

\begin{lemma}\label{lemma:appendix_comparison}
	For a large enough $n$, the radius $\tau=\frac{n}{2^{i+1}}$, for which the code in Example~\ref{example:noLD} cannot be list decoded efficiently according to Theorem~\ref{theorem:GabidulinCounting}, is strictly smaller than the radius $\tau'$ which is guaranteed by the Corollary~\ref{corollary:LDradiusAntonia}.
\end{lemma}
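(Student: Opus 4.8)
The plan is to compare the two radii by converting the threshold radius $\tau'$ from Corollary~\ref{corollary:LDradiusAntonia} into an explicit expression for the code parameters of Example~\ref{example:noLD}, and then show that this $\tau'$ strictly exceeds $\tau = \frac{n}{2^{i+1}}$ for large $n$. Recall that the code in question is $\Gab[n,(1-\frac{1}{2^i})n+2]$, so its minimum distance is $d = n - k + 1 = \frac{n}{2^i} - 1$. First I would substitute $m = n$ (or keep $n \mid m$ but note the relevant instance $n=m$, where the Johnson-type bound simplifies) and plug $d$ into Corollary~\ref{corollary:LDradiusAntonia} with $\varepsilon = 0$, giving the threshold
\[
\tau' = n - \sqrt{n(n-d)} = n - \sqrt{n\left(n - \tfrac{n}{2^i} + 1\right)}.
\]

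Next I would factor out $n$ inside the square root and write $\tau' = n\left(1 - \sqrt{\frac{2^i-1}{2^i} + \frac{1}{n}}\right)$. The goal is to show $\tau' > \tau = \frac{n}{2^{i+1}}$, i.e.\ that $1 - \sqrt{\frac{2^i-1}{2^i} + \frac1n} > \frac{1}{2^{i+1}}$. For the idealized case (dropping the $\frac1n$ term) this is precisely the inequality established in Lemma~\ref{lemma:appendixTechnical}, namely $1 - \sqrt{\frac{2^i-1}{2^i}} > \frac{1}{2^{i+1}}$. So the core algebraic fact is already in hand; the remaining work is to account for the extra additive term $\frac{1}{n}$ under the square root, which only \emph{increases} the radicand and hence only decreases $\tau'$ slightly. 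I would argue that since Lemma~\ref{lemma:appendixTechnical} gives a strict inequality with a fixed positive gap $1 - \sqrt{\frac{2^i-1}{2^i}} - \frac{1}{2^{i+1}} > 0$ depending only on $i$, and since $\sqrt{\frac{2^i-1}{2^i} + \frac1n} \to \sqrt{\frac{2^i-1}{2^i}}$ as $n \to \infty$ by continuity of the square root, for all sufficiently large $n$ the perturbed quantity still satisfies $1 - \sqrt{\frac{2^i-1}{2^i} + \frac1n} > \frac{1}{2^{i+1}}$, whence $\tau' > \tau$.

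The main obstacle, such as it is, is purely bookkeeping: I must make the ``sufficiently large $n$'' quantifier precise and confirm that the perturbation $\frac1n$ does not swallow the constant gap from Lemma~\ref{lemma:appendixTechnical}. This is handled cleanly by the continuity argument above — the gap is a positive constant independent of $n$, while the perturbation vanishes — so no delicate estimate is needed. I would also remark that $\varepsilon = 0$ is the right choice since it yields the smallest (hence most favorable to the competing bound) value of $\tau'$, so establishing the inequality at $\varepsilon = 0$ is the strongest version of the claim. Thus $\tau < \tau'$ for large enough $n$, showing that Theorem~\ref{theorem:GabidulinCounting} rules out efficient list decoding at a strictly smaller radius than Corollary~\ref{corollary:LDradiusAntonia} does, which is exactly the desired conclusion.
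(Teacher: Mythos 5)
Your overall strategy coincides with the paper's: reduce to the case $m=n$, factor $n$ out of the square root, invoke Lemma~\ref{lemma:appendixTechnical} for the unperturbed inequality, and absorb the $O(1/n)$ perturbation for large $n$. However, your handling of $\varepsilon$ contains a genuine error. The threshold of Corollary~\ref{corollary:LDradiusAntonia},
\[
\tau'(\varepsilon)=\frac{m+n}{2}-\sqrt{\frac{(m+n)^2}{4}-m(d-\varepsilon)},
\]
is \emph{strictly decreasing} in $\varepsilon$: increasing $\varepsilon$ increases the radicand, hence decreases $\tau'$. So $\varepsilon=0$ yields the \emph{largest} threshold, i.e.\ the \emph{weakest} instantiation of the competing bound --- exactly the opposite of your closing remark that $\varepsilon=0$ gives ``the smallest (hence most favorable to the competing bound) value of $\tau'$.'' Consequently, proving $\tau<\tau'(0)$ only shows that the radius of Theorem~\ref{theorem:GabidulinCounting} beats the corollary in its weakest form; for $\varepsilon$ close to $1$ the corollary guarantees a strictly smaller radius, about which your argument says nothing. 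For the lemma to be meaningful (and this is how the paper reads it), one must beat $\tau'(\varepsilon)$ for \emph{every} admissible $0\le\varepsilon<1$, i.e.\ beat the infimum $\inf_{0\le\varepsilon<1}\tau'(\varepsilon)$, which is attained in the limit $\varepsilon\to 1$.

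The repair is small and leaves your analysis intact, and it is precisely how the paper's proof opens: insert $\varepsilon=1$ into the formula, which gives a quantity strictly smaller than $\tau'(\varepsilon)$ for every $\varepsilon<1$, so it suffices to beat that. With $m=n$ and $d=\frac{n}{2^i}-1$ this gives
\[
\tau'(\varepsilon)\;>\;n\left(1-\sqrt{1-\frac{1}{2^i}+\frac{2}{n}}\right),
\]
and your continuity argument goes through verbatim with $\frac{2}{n}$ in place of $\frac{1}{n}$: the gap $1-\sqrt{(2^i-1)/2^i}-2^{-i-1}>0$ from Lemma~\ref{lemma:appendixTechnical} is a constant depending only on $i$, while the perturbation vanishes as $n\to\infty$. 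A secondary point: you assert the reduction to $m=n$ without justification. The paper justifies it by noting (via Theorem~\ref{theorem:countingAntonia}) that the threshold $\tau'$ grows with $m$, so the bound of Corollary~\ref{corollary:LDradiusAntonia} only gets weaker for $m>n$, whereas the radius $\tau=\frac{n}{2^{i+1}}$ of Theorem~\ref{theorem:GabidulinCounting} does not depend on $m$; hence $m=n$ is the hardest case, and it suffices to verify the inequality there.
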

\begin{proof}
	Inserting $\varepsilon=1$ into the bound of Corollary~\ref{corollary:LDradiusAntonia} provides a stronger bound than Corollary~\ref{corollary:LDradiusAntonia} for any $\varepsilon<1$. 
	Therefore, when our bound outperforms Corollary~\ref{corollary:LDradiusAntonia} with $\varepsilon=1$, our bound also outperforms Corollary~\ref{corollary:LDradiusAntonia} with $\varepsilon<1$.
	
	Since in Example~\ref{example:noLD} we have $d=\frac{n}{2^i}-1$ it follows that
	\begin{eqnarray}
	\nonumber\tau'&\ge& \frac{m+n}{2}-\sqrt{\frac{(m+n)^2}{4}-m(d-1)}\\
	\nonumber&=&\frac{m+n}{2}-\sqrt{\frac{(m+n)^2}{4}-m\left(\frac{n}{2^i}-2\right)}\\
	&=&\left( \frac{m+n}{2}\right)\left( 1-\sqrt{1-\frac{4m(d-1)}{(m+n)^2}}\right).
	\label{equation:appendixBound}
	\end{eqnarray}
	Notice that by Theorem~\ref{theorem:countingAntonia}, the bound of~\cite{BoundsOnListDecodingOfRankMetric} is weaker if ${m>n}$, whereas the bound of Theorem~\ref{theorem:GabidulinCounting} does not depend on $m$. Therefore, it suffices to show that the bound from Theorem~\ref{theorem:GabidulinCounting} outperforms the one from~\cite{BoundsOnListDecodingOfRankMetric} for $m=n$. In this case, (\ref{equation:appendixBound}) simplifies to
	\begin{eqnarray}\label{equation:appendixBound2}
	\tau'\ge n\left(1-\sqrt{1-\frac{1}{2^i}+\frac{2}{n}}\right).
	\end{eqnarray}
	For a large enough $n$ the term $\frac{2}{n}$ may be neglected. Hence, by Lemma~\ref{lemma:appendixTechnical}, (\ref{equation:appendixBound2}) implies that 
	\begin{eqnarray*}
		\tau'\ge n\left(1-\sqrt{\frac{2^i-1}{2^i}}\right)>\frac{n}{2^{i+1}}=\tau.
	\end{eqnarray*}
\end{proof}

\section*{Acknowledgment}
The authors would like to thank Ron M. Roth for bringing up the idea of puncturing Gabidulin codes (Lemma~\ref{lemma:puncturingBound} and Corollary~\ref{corollary:onePuncturing}), and the idea of Example~\ref{example:ronny}. The authors would also like to thank the reviewers whose comments helped to improve the presentation of the paper, and finally, to Pierre Loidreau, for pointing out an error in previous versions of this paper.

\IEEEbiographynophoto {Netanel Raviv} (S'15) received a B.Sc. degree from the department of mathematics and an M.Sc. degree from the department of Computer Science at the Technion--Israel Institute of Technology, Haifa,
Israel, at 2010 and 2013, respectively. He is now a Doctoral student at the department of Computer Science at the Technion. He is an awardee of the IBM Ph.D. fellowship for the academic year of 2015-2016, and the Aharon and Ephraim Katzir study grant for 2015. His research interests include coding for distributed storage systems, algebraic coding theory, network coding, and algebraic structures. 

\IEEEbiographynophoto {Antonia Wachter-Zeh}(S'10--M'14) received a 
B.S. degree in electrical engineering in 2007 from the University of Applied Science Ravensburg, Germany, 
and the M.S. degree in communications technology in 2009 from Ulm University, Germany.
She obtained her Ph.D. degree in 2013 at the Institute of Communications Engineering, University of Ulm, Germany and at the 
Institut de recherche mathématique de Rennes (IRMAR), Université de Rennes 1, Rennes, France.
Currently, she is a postdoctoral researcher at the Technion---Israel Institute of Technology, Haifa, Israel.
Her research interests are coding and information theory and their applications.


\begin{thebibliography}{1}

\bibitem{BartzSidorenko}
H.~Bartz and V.~Sidorenko, ``List and Probabilistic Unique Decoding of Folded Subspace Codes,'' {\em IEEE International Symposium on Information Theory (ISIT)}, pp.~11--15, 2015.

\bibitem{Cyclic}
E.~Ben-Sasson, T.~Etzion, A.~Gabizon, and N.~Raviv, ``Subspace polynomials and cyclic subspace codes,'' {\em IEEE Transactions on Information Theory,} vol.~62, no.~1, pp.~1--9, 2016.
    
\bibitem{AffineDispersers}
E.~Ben-Sasson and S.~Kopparty, ``Affine dispersers from subspace polynomials,'' {\em SIAM Journal on Computing}, vol.~41, no.~4, pp.~880--914, 2012.
  
\bibitem{SubspacePolynomialsAndLimits}
E.~Ben-Sasson, S.~Kopparty, and J.~Radhakrishnan, ``Subspace polynomials and
  limits to list decoding of {R}eed {S}olomon-codes,'' {\em IEEE Transactions on Information Theory}, vol.~56, no.~1, pp.~113--120, 2010.
  
\bibitem{ConstructingHighOrderElements}
Q.~Cheng, S.~Gao, and D.~Wan, ``Constructing high order elements through subspace polynomials,'' {\em Symposium on Discrete Algorithms (SODA)}, pp.~1457--1463, 2012.

\bibitem{Delsarte}
P.~Delsarte, ``Bilinear forms over a finite field, with applications to coding theory,'' {\em Journal of Combinatorial Theory}, Series A, vol. 25, no. 3, pp. 226--241, 1978.

\bibitem{Ding}
Y.~Ding, ``On list decodability of random rank metric codes and subspace codes.'' {\em IEEE Transactions on Information Theory}, vol. 61, no.1, pp.51--59, 2015.‏

\bibitem{dumer2003hardness}
I.~Dumer, D.~Micciancio, and M.~Sudan, ``Hardness of approximating the minimum
  distance of a linear code,'' {\em IEEE Transactions on Information Theory}, vol.~49, no.~1, pp.~22--37, 2003.
  
\bibitem{ErrorCorrectingCodesInProjectiveSpace}
T.~Etzion and A.~Vardy, ``Error-correcting codes in projective space,'' {\em IEEE Transactions on Information Theory}, vol.~57, no.~2, pp.~1165--1173, 2011.
  
\bibitem{TheoryOfCodesWithMaximumRankDistance}
E.~M.~Gabidulin, ``Theory of codes with maximum rank distance,'' {\em Problems of Information Transmission (English translation of Problemy Peredachi Informatsii)}, vol.~21, 1985.
  
\bibitem{Gabidulin1991Ideals}
E.~M. Gabidulin, A.~V. Paramonov, and O.~V. Tretjakov, ``{Ideals over a noncommutative ring and their applications to cryptography},'' \emph{Eurocrypt}, pp.~482-489, 1991.
 
\bibitem{ConstantRankCodesAndTheir}
M.~Gadouleau and Z.~Yan, ``Constant-rank codes and their connection to constant-dimension codes,'' {\em IEEE Transactions on Information Theory},  vol.~56, no.~7, pp.~3207--3216, 2010.

\bibitem{AlgorithmicResults}
V.~Guruswami, {\em Algorithmic results in list decoding}, Now Publishers Inc, 2006.

\bibitem{Guruswami-Sudan}
V.~Guruswami and M.~Sudan, ``{Improved decoding of Reed--Solomon and
  Algebraic--Geometry Codes},'' \emph{IEEE Transactions on Information Theory}, vol.~45, no.~6, pp.~1757--1767, Sep. 1999.

\bibitem{GuruswamiExplicit}
V.~Guruswami and C.~Wang, ``Evading subspaces over large fields and explicit list decodable rank-metric codes,'' {\em APPROX-RANDOM}, pp. 748–-761, 2014.‏‏

\bibitem{GuruswamiExplicit2}
V.~Guruswami, S.~Narayanan, and C.~Wang, ``List decoding subspace codes from insertions and deletions,'', {\em Innovations in Theoretical Computer Science (ITCS)}, pp.~183--189, 2012.

\bibitem{KandK}
R.~K\"{o}tter and F.R.~Kschischang, ``Coding for errors and erasures in random network coding,'' {\em IEEE Transactions on Information Theory}, vol.~54, no.~8, pp.~3579--3591, 2008.‏
  
\bibitem{FiniteFields}
R.~Lidl and H.~Niederreiter, {\em {Finite Fields}}, vol.~20 of {\em Encyclopedia of Mathematics and Its Applications}. \newblock Cambridge University Press, 1997.

\bibitem{Loidreau2010Designing}
P.~Loidreau, ``{Designing a rank metric based McEliece cryptosystem},'' \emph{Post-Quantum Cryptography}, pp. 142--152, 2010.
  
\bibitem{Lusina2003Maximum}
  P.~Lusina, E.~M.~Gabidulin, and M.~Bossert, ``{Maximum rank distance codes as space-time codes},'' \emph{IEEE Transactions on Information Theory}, vol.~49, no.~10, pp.~2757--2760, 2003.
    
\bibitem{Lu2004Generalized}
H.~F. Lu and P.~V. Kumar, ``{Generalized unified construction of space-time codes with optimal rate-diversity tradeoff},'' \emph{IEEE International Symposium on Information Theory (ISIT)}, p.~95--99, 2004.

\bibitem{LD1}
H.~Mahdavifar and A.~Vardy, ``Algebraic list decoding of subspace codes," {\em IEEE Transactions on Information Theory}, vol. 59, no. 12, pp. 7814--7828, 2013.

\bibitem{LD2}
H.~Mahdavifar and A.~Vardy, ``Algebraic list decoding of subspace codes with multiplicities,'' {\em 49th Annual Allerton Conference on Communication, Control, and Computing}, pp. 1430--1437, 2011.

\bibitem{OnASpecialClassOfPolynomials}
\O.~Ore, ``On a special class of polynomials,'' {\em Transactions of the American Mathematical Society}, vol.~35, no.~3, pp.~559--584, 1933.
  
\bibitem{RavivWachterzeh-ISITListDecoding}
N.~Raviv and A. Wachter-Zeh, ``Some Gabidulin codes cannot be list decoded efficiently at any radius,'' \emph{IEEE International Symposium on Information Theory (ISIT)}, pp.6--10, 2015.

\bibitem{RosenthalSubspace}
J.~Rosenthal, N.~Silberstein, and A.-L. Trautmann, ``On the geometry of balls in the Grassmannian and list decoding of lifted Gabidulin codes,'' {\em Designs, Codes and Cryptography}, vol.~73, no.~2, pp.~393--416, 2014.
  
\bibitem{MaximumRankArrayCodes}
R.~Roth, ``Maximum-rank array codes and their application to crisscross error correction,'' {\em IEEE Transaction on Information Theory}, vol.~37, pp.~328--336, 2006.
  
\bibitem{SilbersteinRawat-OptimallyLocallyRepairable_2013}
N.~Silberstein, A.~S.~Rawat, O.~O.~Koyluoglu, and S.~Vishwanath, ``{Optimal locally repairable codes via rank-metric codes},'' \emph{IEEE International Symposium on Information Theory (ISIT)}, pp.~1819--1823, 2013.
  
    
\bibitem{SilbersteinRawatVish-ErrorResilDistributedStorage_2012}
N.~Silberstein, A.~S. Rawat, and S.~Vishwanath, ``{Error resilience in distributed storage via rank-metric codes},'' \emph{Allerton Conference on Communication, Control, Computing}, pp. 1150--1157, 2012.


\bibitem{ARankMetricApproach}
D.~Silva, F.~Kschischang, and R.~K\"{o}tter, ``A rank-metric approach to error control in random network coding,'' {\em IEEE Transactions on Information Theory}, vol.~54, no.~9, pp.~3951--3967, 2008.

\bibitem{BoundsOnListDecodingOfRankMetric}
A.~Wachter-Zeh, ``Bounds on list decoding of rank-metric codes,'' {\em IEEE Transactions on Information Theory}, vol.~59, no.~11, pp.~7268--7277, 2013.

\bibitem{xing2003nonlinear}
C.~Xing, ``Nonlinear codes from algebraic curves improving the Tsfasman-Vladut-Zink bound,'' {\em IEEE Transactions on Information Theory}, vol.~49, no.~7, pp.~1653--1657, 2003.
\end{thebibliography}
\end{document}